\newtheorem{result}{Result}
\newtheorem{theorem}{Theorem}
\newtheorem{lemma}[theorem]{Lemma}
\newtheorem{corollary}[theorem]{Corollary}
\newtheorem{definition}[theorem]{Definition}
\newtheorem{remark}{Remark}
\newcommand*{\NC}{\mathsf{NC}}
\newcommand*{\AC}{\mathsf{AC}}
\newcommand*{\TC}{\mathsf{TC}}
\newcommand*{\QNC}{\mathsf{QNC}}
\newcommand*{\MOD}{\mathrm{MOD}}
\newcommand*{\QAC}{\mathsf{QAC}}
\newcommand*{\iQNC}{\mathsf{i}\text{-}\mathsf{QNC}}
\newcommand*{\fQNC}{\mathsf{QNC}}
\newcommand*{\iQAC}{\mathsf{i}\text{-}\mathsf{QAC}}
\newcommand*{\fQAC}{\mathsf{QAC}}
\newcommand*{\iQTC}{\mathsf{i}\text{-}\mathsf{QTC}}
\newcommandx{\info}[2][1=]{\todo[linecolor=grey,backgroundcolor=grey!25,bordercolor=grey,#1]{#2}}
\newcommandx{\change}[2][1=]{\todo[linecolor=blue,backgroundcolor=blue!25,bordercolor=blue,#1]{#2}}
\newcommandx{\missing}[2][1=]{\todo[linecolor=red,backgroundcolor=red!25,bordercolor=red,#1]{#2}}
\newcommandx{\agnote}[1]{\todo[linecolor=green,backgroundcolor
=green!25,bordercolor=green]{Alex: #1}}
\newcommandx{\mnote}[1]{\todo[linecolor=blue,backgroundcolor
=blue!25,bordercolor=blue]{Michael: #1}}
\title{The Power of Shallow-depth Toffoli and Qudit Quantum Circuits}
\author{Alex B. Grilo$^{1}$, Elham Kashefi$^{1,2}$, Damian Markham$^{1}$, Michael de Oliveira$^{1,3,}$\thanks{\tiny\faEnvelopeO\ \href{mailto:michaeldeoliveira848@gmail.com}{michaeldeoliveira848@gmail.com}} \ \\
\normalsize
   \textit{$^1$Sorbonne Université, CNRS, LIP6, France} \\ \normalsize 
   \textit{$^2$School of Informatics, University of Edinburgh, Scotland} \\ \normalsize
   \textit{$^3$
International Iberian Nanotechnology Laboratory, Portugal}\\
}
\date{}
\begin{document}

\maketitle

\begin{abstract}
The relevance of shallow-depth quantum circuits has recently increased, mainly due to their applicability to near-term devices. In this context, one of the main goals of quantum circuit complexity is to find problems that can be solved by quantum shallow circuits but require more computational resources classically.

Our first contribution in this work is to prove new separations between classical and quantum constant-depth circuits. Firstly, we show a separation between constant-depth quantum circuits with quantum advice $\QNC^0/\mathsf{qpoly}$, and $\AC^0[p]$, which is the class of classical constant-depth circuits with unbounded-fan in and $\MOD_{p}$ gates. Additionally, we show a separation between $\QAC^0$, the circuit class containing Toffoli gates with unbounded control, and $\AC^0[p]$, when $\QAC^0$ is augmented with additional mid-circuit measurements and classical fanout. This establishes the first such separation for a shallow-depth quantum class that does not involve quantum fanout gates, while relying solely on finite quantum gate sets. Equivalently, this yields a separation between $\AC^0[p]$ and $[\QNC^0, \AC^0]^2$, i.e., shallow quantum circuits interleaved with simple classical computation.

Secondly, we consider $\QNC^0$ circuits with infinite-size gate sets. We show that these circuits, along with quantum prime modular gates or classical prime modular gates in combination with classical fanout, can implement threshold gates, showing that $\mathsf{QNC}^0[p]=\mathsf{QTC}^0$. Finally, we also show that in the infinite-size gate set case, these quantum circuit classes for higher-dimensional Hilbert spaces do not offer any advantage to standard qubit implementations.
\end{abstract}

\clearpage

\section{Introduction}\label{intro}

In the current landscape dominated by NISQ (Noisy Intermediate-Scale Quantum) devices~\cite{Preskill2018quantumcomputingin}, understanding the computational power of shallow-depth quantum circuits has both practical and theoretical relevance. From a practical perspective, given the noisy aspect of near-term devices, focusing on tasks that achieve quantum advantage in low depth is fundamental. From a theoretical perspective, (classical) constant-depth circuits have been crucial in the development of lower bound techniques, e.g. \cite{rossman15}. Thus, the combination of these techniques, along with the computational capabilities of constant-depth quantum circuits, enables the discovery of new quantum-classical separations and, furthermore, allow a more comprehensive understanding of quantum computational complexity classes \cite{Arora22, jozsa2006, Grier20}.

The holy-grail in quantum circuit complexity is to find problems that can be solved by shallow-depth quantum circuits and that cannot be solved by polynomial-time classical computation. Unfortunately, there are several caveats in the current approaches. Primarily, the current super-polynomial quantum advantage proposals rely on computational assumptions, offering no unconditional proof of quantum superiority \cite{aaronson2011computational,bouland2019complexity}. Additionally, another set of results requires quantum gates with arbitrary accuracy, which cannot be implemented in practice in real-world devices at a constant depth. This problem arises when infinite gate sets are assumed \cite{Briet23,Watts23}, as it has been shown that even if precise quantum operations were feasible, the computational overhead for classical control could negate any quantum advantage \cite{hu2020space}.

More recently, there is a new line of research that has addressed some of these issues, at the cost of achieving weaker statements. More concretely, the seminal work of \cite{Bravyi17} has shown a problem that can be solved by constant-depth quantum circuits ($\QNC^0$), and which cannot be solved by constant-depth classical circuits with bounded fan-in ($\NC^0$). We would like to highlight that while this separation is weaker (since the lower bound is only against constant depth circuits), it is {\em unconditional}, which is rather rare in complexity theory. Moreover, these proposals also have practical interest with their extension to include fault-tolerance~\cite{bravyi2022}.

The result of \cite{Bravyi17} has also been extended to achieve better quantum/classical separations. We now have average-case separations between quantum constant-depth circuits and biased threshold circuits of constant depth \cite{hsieh2024unconditionally,grewal2024improved}, as well as separations against $\AC^0$ circuits with access to $\MOD_2$ gates ($\AC^0[2]$) using constant-depth quantum circuits with quantum advice $\fQNC^0/\mathsf{qpoly}$ \cite{Watts19}. In all these results, it is necessary to show a quantum upper bound and a classical lower bound on circuit size for the same problem. The main challenge lies in developing new techniques for classical lower bounds and quantum upper bounds that allow us to address the following question:

\begin{center}
   {\em What is the largest class of classical circuits that we can separate from constant-depth quantum circuits? }
\end{center}

The main contribution of this work is to prove new separations between constant-depth quantum circuits and constant-depth classical circuits. Additionally, we revisit quantum circuits with infinite gate set and prove new collapses of different quantum circuit classes.

\subsection{Our contributions}\label{contri}

Our first contribution is to show that there exists a set of relational problems, where the inputs $x$ are $n$-bit strings and the outputs $y$ are $m$-bit strings such that the pair $(x,y)$  is in $R\subset \{0,1\}^n\rightarrow \{0,1\}^m$, for which constant-depth quantum circuits on qudits with quantum advice exist, but are not contained in $\AC^0[p]$. 

\begin{result} \label{result:separationQNC}
(Informal; see \Cref{paralsep}) For all primes $p$, there exists a finite gate set on qudits of dimension $p$ such that constant-depth quantum circuits with quantum advice can solve relation problems that are intractable for any polynomial-size $\AC^0[p]$ circuit, i.e.,  $\fQNC^0/\mathsf{qpoly} \nsubseteq \AC^0[p].$
\end{result}

This result extends the $\AC^0[2]$ separation from \cite{Watts19} to general primes.  To prove it,  we generalize the relation problems on modular operations that have been used in \cite{Bravyi17,Watts19}. Roughly, our modular relation problems with parameters $p,q \in \mathbb{N}$ consist of a pair of bit strings $(x,y)$ such that  $|x| \pmod{p} = 0$ iff $|y| \pmod{q} = 0$. Here, $|x|$ and $|y|$ represent the Hamming weights of the respective bit strings. We show that by exploiting higher-dimensional gate sets, we can solve these problems in constant-depth quantum circuits with a copy of a high-dimensional GHZ state. At the same time, we expand the classical lower bound to $\AC^0[p]$. However, this separation achieves only a quantum/classical distinction regarding error probabilities, which differ by a constant value. In order to improve these separations, we then consider the parallel repetition of these modular problems as proposed in ~\cite{Watts19,coudron2021}. In our results, we simplify the proof of \cite{Watts19} by considering the one-sided error of the quantum circuits and using seminal results in circuit complexity such as the Vazirani XOR lemma~\cite{goldreich2011three} and Razborov-Smolensky separations~\cite{Smolensky87}.

We proceed to show that $\QAC^0$ circuits, which allow for mid-circuit measurements and classical but not quantum fanout, can still generate high-dimensional GHZ states. For this, we employ optimal depth and size methods for GHZ state creation\footnote{These methods were originally described for the qubit case and general graph states in \cite{Watts19}, and have also been studied in the context of one-dimensional cluster states in \cite{Peres23,quek2024multivariate}.}, which is based on poor-man's cat states and balanced binary trees as the resource state entanglement structure. This gadget, along with Result~\ref{result:separationQNC}, allows us to show that for every prime $p$, there exists a constant-depth quantum circuit employing Toffoli gates of dimension $p$ that computes a relation unattainable by any polynomial-size $\AC^0[p]$ circuit. Moreover, when Toffoli gates of arbitrary arity can implement any qubit operation, we attain an unconditional separation between this complexity class and all $\AC^0[p]$ classes simultaneously. It is worth noting that this result aligns with the standard definition of $\QAC^0$, as outlined in the existing literature \cite{pade2020,rosenthal2020,fang06,bera11}, while incorporating classical fanout to describe the ability to copy classical information resulting from mid-circuit measurements an arbitrary number of times — a fairly simple operation from a technical perspective. By excluding quantum fanout, these separations are achieved through an alternative multi-qubit quantum operation without relying on the multi-qubit operations utilized in previous results. Furthermore, recent advances and practical demonstrations have suggested that some quantum device platforms inherently support multi-qubit operations \cite{Song24}, and particularly enabling the construction of multi-qubit Toffoli gates \cite{maslov2018use,groenland2020signal,grzesiak2022efficient,bravyi2022constant}.

\begin{result}
(Informal; see \Cref{ssep}) For every prime $p$, there exists a relation problem solvable exactly by $\QAC^0$ circuits with classical fanout, whereas any $\AC^0[p]$ circuit of polynomial size solves an arbitrary random instance of the problem with a negligible probability. Therefore, $\QAC_{\mathsf{cf}}^0 \nsubseteq \AC^0[p]$  holds for all prime $p$. 
\end{result}

This establishes the separation between $\QAC^0$ and $\AC^0[p]$ for any prime $p$ the single addition of classical fanout, the possibly weakest resource so far \cite{green2000,Green02,Browne11}, pushing forward the question of the minimal resource, which was recently asked in \cite{rosenthal2020}.

We notice that in order to achieve such a result with our techniques, it is essential to consider qudits of higher dimensions. Additionally,
we remark that $\QAC_{\mathsf{cf}}^0$ can generate a GHZ state without the need of unbounded parity gates, as proposed in \cite{rosenthal2020}. Our $\QAC_{\mathsf{cf}}^0$ circuits use quantum measurements and adaptive operations to generate the states.
However, the ability to generate this state does not necessarily imply the ability to compute the parity function, since our state-creation protocol is non-unitary, whereas previous mappings between these resources rely on a unitary GHZ state preparation. We conjecture that generating the GHZ state through a unitary procedure (i.e. without measurements) would require quantum fanout. Thus, the question of whether the parity function can be computed in $\QAC^0$ without fanout gates remains open for further investigation \cite{Parham23}.

Note that our $\QAC^0_{\mathsf{cf}}$ circuits can equivalently be realized as $[\QNC^0, \AC^0]^2$ circuits, i.e., $\QNC^0$ circuits interleaved with a single $\AC^0$ classical computation layer. This shows that augmenting shallow-depth quantum circuits with even very limited classical processing can strictly enhance their computational power, yielding the same separation from $\AC^0[p]$.

\medskip 

Next, we provide some limitations on our techniques by showing a classical upper bound on the modular relation problems that we consider.

\begin{result}\label{relationUPm}
(Informal; see \Cref{relationUP}) For every primes $p,q$, the modular relation problem on strings $(x,y)$ such that $|x| \pmod{p} = 0$ iff $|y| \pmod{q} = 0$ can be solved by $\NC^0[q]$ circuits.~\footnote{These circuits are composed of bounded fan-in gates, as in $\NC^0$, but additionally contain a single unbounded fan-in  $\MOD_q$ gate.}
\end{result}

We notice that this result was unknown even for $p = 2$, and it characterizes $\AC^0[p]$ as the largest constant-depth classical circuit class for which a quantum-classical separation can be identified in the plain model using our proposed relation problems. Indeed, it is conjectured that this particular set of modular relation problems might be sufficient to achieve unconditional separations between the aforementioned $\AC^0[p]$ classical circuit classes and less powerful quantum circuit classes, such as $\fQNC^0$. In parallel, separations against classes like $\TC^0$ (or larger classes) have been achieved with different classes of problems but require the interactive setting \cite{Grier20}.

\medskip 

We then switch gears and consider the class of constant-depth quantum circuits with an infinite size gate set and quantum qudit fanout.\footnote{For any prime $p$, qudit fanout for $p$-dimensional qudits is equivalent to $\mathsf{qMOD}_p$ gates, the quantum versions of the classical $\MOD_p$ gates.} In this setting, we show that for any prime $p$,  we have the collapse of the constant-depth quantum hierarchy for quantum circuits with $p$-dimensional qudits, extending the qubit case~\cite{tani16}.

\begin{result}\label{r:fullcollapse}
(Informal; see \Cref{fullcollapse}) For all prime $p$, there exist a constant depth quantum circuits over qudits of dimension $p$ with auxiliary unbounded quantum fanout gates that implement quantum threshold gates, i.e., $\iQNC^0[p]=\iQTC^0$.\footnote{As we describe later, in our notation, the subscript $[p]$ means access to a quantum and $[p]_c$ to a classical unbounded fan-in modular $p$ gate.}
\end{result}

This result establishes computational equivalence across shallow-depth quantum circuits when implemented on systems over prime-dimensional Hilbert spaces. Specifically, our quantum circuits are constructed using Fourier transforms over Abelian groups and this approach allows the creation of circuits for functions over finite fields of the form $\mathbb{F}_p^n \rightarrow \mathbb{F}_p$. These circuits allow the execution of multi-qubit controlled logical operations.

We then demonstrate that constant-depth quantum circuits implemented in higher prime dimensions can be realized using qubits, with only a multiplicative increase in the gate count. Additionally, quantum fanout can be eliminated when using classical modular gates in conjunction with classical fanout.

\begin{result}\label{r:qudits-to-qubits}
(Informal; see \Cref{qubitcol}). For any prime $p$, any constant-depth quantum circuit over qudits of dimension $p$ with threshold gates can be simulated by a constant-depth quantum circuit over qubits with auxiliary $\MOD_q$ gates. Furthermore, for $\iQNC_{\mathsf{cf}}^0$ circuits over qubits, we have $\iQTC^0 \subseteq \iQNC_{\mathsf{cf}}^0[p]_\mathsf{c}$.
\end{result}

This result is achieved by mapping qudit operations onto tensor products of qubits. The qudit operations are produced using exact unitary decomposition methods, as previously demonstrated \cite{reck94,Barenco95}. Specifically, this theorem implies that the qubit-specific hierarchy can be collapsed using any arbitrary classical modular prime gate in combination with classical fanout — a capability previously believed to require either quantum fanout or multi-qubit parity gates. Consequently, we have that $\iQNC_{\mathsf{cf}}^0[p]_\mathsf{c}=\mathsf{QACC}^0$ holds for all primes $p$. This significantly contrasts with classical analog circuit classes, where varying modular gates results in different complexity classes.

From a practical standpoint, this indicates that for any algorithm within these classes, employing qudits instead of qubits offers no significant computational advantage—except potentially for simpler hardware implementations and improved error-correction schemes in Hilbert spaces of varying dimensions. It also demonstrates that quantum circuits over qubits with fixed depth, when augmented with unbounded fan-in classical prime modular operations $\MOD_p$, can implement the quantum subroutines of algorithms such as factoring, which are conjectured to provide exponential quantum advantage \cite{Hoyer05,Browne11}. Furthermore, constant-depth circuits with these gate sets can be realized using standard finite gate sets in logarithmic depth, still resulting in relatively shallow quantum circuits. Additionally, using classical modular gates simplifies implementation, as these gates can be efficiently realized, whereas their quantum counterparts are technically more challenging to implement.

Finally, as future work, we propose extending \Cref{r:qudits-to-qubits} to show that any $\iQTC^0$ circuit can be simulated by constant-depth quantum circuits  over qudits of dimension $p$ with additional classical modular gates of arity $q$, for arbitrary primes $p$ and $q$. This would establish the equivalence between all constant-depth quantum circuits over qudits of different prime dimensions, which could be advantageous for hardware implementations that favor a particular gate sets over a prime-dimensional Hilbert space.

\subsection{Organization} 
Section \ref{preliminaries} introduces all the preliminary definitions necessary for the subsequent results. Section \ref{fqncs} then delves into the study of shallow-depth quantum circuits with finite-size gate sets. In this section, we introduce the concept of modular relation problems and explain how quantum circuits can solve them. Additionally, we establish the $\AC^0[p]$ lower bounds for the circuit classes addressing these problems and conclude the section by defining the $\NC^0[p]$ classical upper bounds for the relation problems under consideration. In Section \ref{quditcolapse}, we shift our focus to constant-depth quantum circuits over infinite-size qudit gate sets and unbounded fan-in modular gates, demonstrating both hierarchical collapses and computational equivalences between qubit and qudit formulations.

\section*{Acknowledgments}

We acknowledge helpful discussions with Robin Kothari, Johannes Frank, Sathyawageeswar Subramanian, Leandro Mendes and Min-Hsiu Hiseh on an early version of these results. MdO is supported by National Funds through the FCT - Fundação para a Ciência e a Tecnologia, I.P. (Portuguese Foundation for Science and Technology) within the project IBEX, with reference PTDC/CCI-COM/4280/2021, and via CEECINST/00062/2018 (EFG). ABG is supported by ANR JCJC TCS-NISQ ANR-22-CE47-0004. ABG, EK and DM are supported by the PEPR integrated project EPiQ ANR-22-PETQ-0007 part of Plan France 2030.

\section{Preliminaries} \label{preliminaries}

In this section, we introduce the relevant classical and quantum circuit classes, along with the basic definitions of qudit quantum computation and helpful auxiliary lemmas and results related to these concepts.

\subsection{Classical circuit classes}

Circuit classes form a crucial aspect of computational theory, helping to categorize and understand the computational capabilities of various models. Certain classical circuit classes emerge as especially relevant when studying constant-depth quantum circuits. This subsection will list and define the main classes of interest while also summarizing pertinent theorems associated with them.

\begin{definition}[$\NC^k$ class] 
The $\mathsf{NC^k}$ class is defined as the computational problems that can be solved in poly-logarithmic depth $log(n)^k$ using polynomially-many gates. The gate set for this class comprises the following bounded fan-in gates,
\begin{figure}[H]
\begin{center}
\scalebox{0.75}{
\begin{tikzpicture}[
    node distance = 1cm,
    operation/.style = {rectangle, draw, minimum size=0.75cm, align=center},
    line/.style = {line width=0.7pt},
]

\node[operation] (and) {AND};
\node[right = 0.2cm of and] (comma1) {\Huge,};
\node[operation, right = 0.2cm of comma1] (or) {OR};
\node[right = 0.2cm of or] (comma2) {\Huge,};
\node[operation, right = 0.2cm of comma2] (not) {NOT};

\node[right = 0.2cm of not] (braceright) {\Huge$\}$};
\node[right = 0cm of braceright] (dot) {\Huge.};

\node[left = 0cm of and] (braceleft) {\Huge$\{$};

\draw[line] ([xshift=-0.1cm, yshift=0.2cm]and.north) -- (and.north);
\draw[line] ([xshift=0.1cm, yshift=0.2cm]and.north) -- (and.north);

\draw[line] ([xshift=-0.1cm, yshift=0.2cm]or.north) -- (or.north);
\draw[line] ([xshift=0.1cm, yshift=0.2cm]or.north) -- (or.north);

\draw[line] ([yshift=0.2cm]not.north) -- (not.north);

\foreach \i in {-0.5,-0.25,...,0.5} {
  \draw[line] ([xshift=\i cm, yshift=-0.15cm]and.south) -- (and.south);
}
\foreach \i in {-0.5,-0.25,...,0.5} {
  \draw[line] ([xshift=\i cm, yshift=-0.15cm]or.south) -- (or.south);
}
\foreach \i in {-0.5,-0.25,...,0.5} {
  \draw[line] ([xshift=\i cm, yshift=-0.15cm]not.south) -- (not.south);
}

\end{tikzpicture}
}
\label{fig:nc0}
\end{center}
\end{figure}
\end{definition}

\begin{definition} [$\AC^k$ class] 
The $\mathsf{AC^k}$ class is defined as the computational problems that can be solved in poly-logarithmic depth $log(n)^k$ using polynomial-many gates. The gate set for this class comprises the following unbounded fan-in gates,
\begin{figure}[H]
\begin{center}
\scalebox{0.75}{
\begin{tikzpicture}[
    node distance = 1cm,
    operation/.style = {rectangle, draw, minimum size=0.75cm, align=center},
    line/.style = {line width=0.7pt},
]

\node[operation] (and) {AND};
\node[right = 0.2cm of and] (comma1) {\Huge,};
\node[operation, right = 0.2cm of comma1] (or) {OR};
\node[right = 0.2cm of or] (comma2) {\Huge,};
\node[operation, right = 0.2cm of comma2] (not) {NOT};

\node[right = 0.2cm of not] (braceright) {\Huge$\}$};
\node[right = 0cm of braceright] (dot) {\Huge.};

\node[left = 0cm of and] (braceleft) {\Huge$\{$};

\foreach \i in {-0.5,-0.25,...,0.5} {
  \draw[line] ([xshift=\i cm, yshift=0.15cm]and.north) -- (and.north);
}

\foreach \i in {-0.5,-0.25,...,0.5} {
  \draw[line] ([xshift=\i cm, yshift=0.15cm]or.north) -- (or.north);
}

\draw[line] ([yshift=0.15cm]not.north) -- (not.north);

\foreach \i in {-0.5,-0.25,...,0.5} {
  \draw[line] ([xshift=\i cm, yshift=-0.15cm]and.south) -- (and.south);
}
\foreach \i in {-0.5,-0.25,...,0.5} {
  \draw[line] ([xshift=\i cm, yshift=-0.15cm]or.south) -- (or.south);
}
\foreach \i in {-0.5,-0.25,...,0.5} {
  \draw[line] ([xshift=\i cm, yshift=-0.15cm]not.south) -- (not.south);
}

\end{tikzpicture}
}
\label{fig:ac0}
\end{center}
\end{figure}
\end{definition}

\begin{definition}$(\AC^k[p]\ \mathsf{class})$. 
The $\AC^k[p]$ class is defined as the computational problems that can be solved in poly-logarithmic depth $log(n)^k$ using polynomial-many gates. The gate set for this class comprises the following unbounded fan-in gates,
\begin{figure}[H]
\begin{center}
\scalebox{0.75}{
\begin{tikzpicture}[
    node distance = 1cm,
    operation/.style = {rectangle, draw, minimum size=0.75cm, align=center},
    line/.style = {line width=0.7pt},
]

\node[operation] (and) {AND};
\node[right = 0.2cm of and] (comma1) {\Huge,};
\node[operation, right = 0.2cm of comma1] (or) {OR};
\node[right = 0.2cm of or] (comma2) {\Huge,};
\node[operation, right = 0.2cm of comma2] (mod) {$\MOD_p$};
\node[right = 0.2cm of mod] (comma3) {\Huge,};
\node[operation, right = 0.2cm of comma3] (not) {NOT};
\node[right = 0.2cm of not] (braceright) {\Huge$\}$};

\node[right = 0cm of braceright] (comma) {\Huge,};

\node[left = 0cm of and] (braceleft) {\Huge$\{$};

\foreach \i in {-0.5,-0.25,...,0.5} {
  \draw[line] ([xshift=\i cm, yshift=0.15cm]and.north) -- (and.north);
}

\foreach \i in {-0.5,-0.25,...,0.5} {
  \draw[line] ([xshift=\i cm, yshift=0.15cm]or.north) -- (or.north);
}

\foreach \i in {-0.5,-0.25,...,0.5} {
  \draw[line] ([xshift=\i cm, yshift=0.15cm]mod.north) -- (mod.north);
}

\draw[line] ([yshift=0.15cm]not.north) -- (not.north);

\foreach \i in {-0.5,-0.25,...,0.5} {
  \draw[line] ([xshift=\i cm, yshift=-0.15cm]and.south) -- (and.south);
}
\foreach \i in {-0.5,-0.25,...,0.5} {
  \draw[line] ([xshift=\i cm, yshift=-0.15cm]or.south) -- (or.south);
}
\foreach \i in {-0.5,-0.25,...,0.5} {
  \draw[line] ([xshift=\i cm, yshift=-0.15cm]mod.south) -- (mod.south);
}
\foreach \i in {-0.5,-0.25,...,0.5} {
  \draw[line] ([xshift=\i cm, yshift=-0.15cm]not.south) -- (not.south);
}
\end{tikzpicture}
}
\label{fig:ac0p}
\end{center}
\end{figure}
\noindent with $\MOD_p$ gate defined as follows,
\begin{equation}
\mathsf{MOD_p}(x) = 
\begin{cases} 
1 & \text{if } \sum_{i=1}^{n} x_i \mod p=0 \\
0 & \text{if } \sum_{i=1}^{n} x_i \mod p\neq 0 
\end{cases}. 
\end{equation} 
\end{definition}

The Razborov-Smolensky separations between all the $\AC^0[p]$ classes will be fundamental to establishing new separations between these and quantum constant depth circuit classes.

\begin{theorem}\label{Acseparations} \cite{Smolensky87}
Let $p$ be a prime number, and $q$ is not a power of $p$, then computing $\MOD_p$ with a depth $d$,  $\AC^0[q]$ circuit, requires $\exp(\Omega(n^{1/2d}))$ size.
\end{theorem}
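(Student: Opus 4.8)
This is the classical Razborov--Smolensky bound, which the paper invokes as a black box (\cite{Smolensky87}); to prove it one would use the polynomial approximation method. Since $q$ is not a power of $p$ it has a prime factor $\ell\neq p$, and using the standard constant-depth inter-simulations of $\MOD$-gates among powers of one prime (in the spirit of \Cref{relationUP}, $\MOD_{\ell^k}\leftrightarrow\MOD_\ell$) it suffices to prove the bound for $\AC^0[\ell]$ circuits over a single prime $\ell\neq p$; fix in addition an integer $j$ with $\ell^j\equiv1\pmod p$, so that $\mathbb{F}_{\ell^j}$ contains a primitive $p$-th root of unity $\omega$. The proof then has two independent halves, played off against a dimension count.

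\emph{Step 1: polynomials approximate $\AC^0[\ell]$.} I would show that every depth-$d$, size-$s$ circuit $C$ over $\{\mathsf{AND},\mathsf{OR},\mathsf{NOT},\MOD_\ell\}$ on $n$ bits admits a polynomial $P\in\mathbb{F}_{\ell^j}[x_1,\dots,x_n]$ of degree $(c\log s)^{d}$ with $\Pr_{x\in\{0,1\}^n}[P(x)\neq C(x)]\le 1/4$. This is done gate by gate: $\mathsf{NOT}$ becomes $1-x$; a $\MOD_\ell$-gate on a sum $S$ is exactly $1-S^{\ell-1}$ by Fermat's little theorem, multiplying the degree by $\ell-1$; and an unbounded $\mathsf{OR}$ of sub-results $g_1,\dots,g_m$ (each a polynomial of degree $D$) is replaced by the probabilistic polynomial $1-\prod_{r=1}^{t}\bigl(1-L_r^{\ell-1}\bigr)$ with the $L_r$ independent random $\mathbb{F}_\ell$-linear forms in $g_1,\dots,g_m$, whose local error on any fixed input is $\le\ell^{-t}$; taking $t=O(\log s)$ drives the local error below $1/(4s)$ while multiplying the degree by $O(\log s)$, and $\mathsf{AND}$ follows by de~Morgan. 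Fixing one good setting of all the randomness and union-bounding the local errors over the $\le s$ gates gives the claim, with final degree $(O(\log s))^{d}$.

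\emph{Step 2: low-degree polynomials cannot approximate $\MOD_p$.} I would show that no $Q\in\mathbb{F}_{\ell^j}[x]$ of degree $\Delta=o(\sqrt n)$ can agree with $\MOD_p$ on more than a $\tfrac34$-fraction of $\{0,1\}^n$. Suppose $Q$ agrees with $\MOD_p$ on $G\subseteq\{0,1\}^n$ with $|G|\ge\tfrac34\,2^n$; shifting the input shows we also obtain a degree-$\Delta$ polynomial agreeing on $G$ with $\widetilde{\MOD}_p(x):=\omega^{\sum_i x_i}=\prod_i\bigl(1+(\omega-1)x_i\bigr)$. After the substitution $y_i=1+(\omega-1)x_i\in\{1,\omega\}$ (invertible since $\omega\neq1$ because $\ell\neq p$), on $Y:=\{1,\omega\}^n$ one has $y_i^p=1$, so for $|S|>n/2$ the monomial $\prod_{i\in S}y_i$ equals $\widetilde{\MOD}_p(y)\cdot\prod_{i\notin S}y_i^{\,p-1}$ on $G$, which on reducing each $y_i^2$ to a linear form has degree $\le\Delta+(n-|S|)<\Delta+n/2$. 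Hence every function $G\to\mathbb{F}_{\ell^j}$ is, restricted to $G$, a polynomial of degree $\le n/2+\Delta$, giving
\[
|G|\ \le\ \sum_{0\le i\le n/2+\Delta}\binom ni\ =\ \Bigl(\tfrac12+O(\Delta/\sqrt n)\Bigr)2^n\,,
\]
which for $\Delta=o(\sqrt n)$ is strictly below $\tfrac34\,2^n$ — contradiction. Combining the steps: a depth-$d$ $\AC^0[\ell]$ circuit for $\MOD_p$ of size $s=\exp\!\bigl(o(n^{1/2d})\bigr)$ would, by Step 1, yield a polynomial of degree $(O(\log s))^{d}=o(\sqrt n)$ agreeing with $\MOD_p$ on $\ge\tfrac34\,2^n$ inputs, contradicting Step 2; so size $\exp(\Omega(n^{1/2d}))$ is required.

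The main obstacle I anticipate is Step 2: the change of variables to a primitive $p$-th root of unity (precisely where the hypothesis $p\neq\ell$ enters), and especially making the constants match — the approximation error $1/4$ from Step 1 must be small enough that the monomial count $\sum_{i\le n/2+\Delta}\binom ni$ genuinely falls below the agreeing fraction, which needs a quantitative binomial-tail estimate rather than a soft dimension argument. The reduction from ``$q$ not a power of $p$'' to a single prime $\ell$ also deserves care: for $q$ with two distinct prime factors $\AC^0[q]$ need not collapse to $\AC^0[\ell]$ for one of them, but in every use of this theorem in the present paper $q$ is prime, so the single-prime statement together with the trivial prime-power reduction is all that is needed.
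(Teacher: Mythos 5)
The paper does not prove this statement at all: it is imported verbatim as Theorem~\ref{Acseparations} with a citation to Smolensky, and is used downstream (in Lemma~\ref{relationac} and Corollary~\ref{sepinfinit}) purely as a black box. Your reconstruction is the standard Razborov--Smolensky argument and is correct in all essentials: the probabilistic-polynomial approximation of $\AC^0[\ell]$ gates over $\mathbb{F}_{\ell^j}$ with degree $(O(\log s))^d$ and error $1/4$, and the dimension count over the agreement set $G$ after the change of variables $y_i=1+(\omega-1)x_i$, with the quantitative tail bound $\sum_{i\le n/2+\Delta}\binom{n}{i}=(\tfrac12+O(\Delta/\sqrt n))2^n$ closing the gap against the $\tfrac34$ agreement fraction. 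The one point you gloss over --- passing from a polynomial agreeing with the Boolean $\MOD_p$ to one agreeing with $\omega^{\sum_i x_i}$ for $p>2$ --- is a standard fixable detail (express $\omega^{\sum x_i}$ as a combination of the $p$ residue indicators, each obtainable from $\MOD_p$ on shifted inputs at the cost of shrinking $G$ by a factor absorbed into the constants). More importantly, you correctly identify the genuine defect in the statement as the paper phrases it: your opening reduction ``$\AC^0[q]\subseteq\AC^0[\ell]$ for a prime factor $\ell$ of $q$'' is false when $q$ has two distinct prime factors (e.g.\ $\AC^0[6]$ is not known to reduce to $\AC^0[2]$ or $\AC^0[3]$, and lower bounds against it are open), so the theorem as literally stated for arbitrary $q$ not a power of $p$ is not what Smolensky proves; the honest hypothesis is that $q$ is a prime power with a prime base different from $p$. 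Since every invocation in this paper takes $q$ prime, nothing downstream is affected, but the paper's phrasing of the cited theorem should be tightened accordingly.
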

 
\noindent Furthermore, it should be noted that this theorem implies that when we construct a graph using the $\AC^0[q]$ classes as vertices and draw edges between pairs of classes that are not subsets of each other, we obtain a complete graph. Thus, for each edge, at least two problems distinguish one class from the other (see Figure \ref{fig:containements}).

\begin{figure}[H]
\begin{center}
\begin{tikzpicture}
    \def\radius{3cm}
    
    \coordinate (Apos) at ({\radius*cos(0)}, {\radius*sin(0)});
    \coordinate (Bpos) at ({\radius*cos(45)}, {\radius*sin(45)});
    \coordinate (Cpos) at ({\radius*cos(90)}, {\radius*sin(90)});
    \coordinate (Qpos) at ({\radius*cos(135)}, {\radius*sin(135)});
    \coordinate (Epos) at (0, -1.3);  
    
    \foreach \position/\name in {Apos/$\AC^0[2]$, Bpos/$\AC^0[3]$, Cpos/$\AC^0[5]$, Epos/$\AC^0[p]$} {
        \fill (\position) circle (2pt);
        \node[right=0.25cm of \position, font=\small] {\name};
    }
    
    \fill (Qpos) circle (2pt);
    \node[left=0.25cm of Qpos, font=\small] {$\AC^0[q]$};
    
    \foreach \from/\to/\label in {
        Apos/Bpos/{$\MOD_2,\MOD_3$},
        Cpos/Qpos/",
        Apos/Cpos/",
        Apos/Epos/{$\MOD_2,\MOD_p$},
        Bpos/Epos/",
        Cpos/Epos/"
    } {
        \draw (\from) -- (\to) node[midway, above, font=\tiny, sloped] {\label};
    } 

    \draw (Apos) -- (Qpos) node[pos=0.47, above, font=\tiny, sloped] {$\MOD_2,\MOD_q$};
    \draw (Bpos) -- (Qpos) node[pos=0.6, above, font=\tiny, sloped] {"};
    \draw (Bpos) -- (Cpos) node[pos=0.3, above, font=\tiny, sloped] {"};
    \draw (Qpos) -- (Epos) node[pos=0.5, above, font=\tiny, sloped] {$\MOD_p,\MOD_q$};
    \draw (Qpos) -- (Epos) node[pos=0.5, below, font=\tiny, sloped] {$\nsupseteqq\ ,\ \nsubseteq$};
    
\end{tikzpicture}
\caption{Graph representation of 
$\AC^0[p]$ classes with $p$ prime. Each edge denotes that the connected classes are distinct, without one being a subset of the other. Along the edges, examples highlight problems that are in one class but excluded from the other.}
\label{fig:containements}
\end{center}
\end{figure}

\begin{definition}[$\TC^k$ class] The $\TC^k$ class is defined as the computational problems that can be solved in poly-logarithmic depth $log(n)^k$ using polynomial-many gates. The gate set for this class comprises the following unbounded fan-in gates,
\begin{figure}[H]
\begin{center}
\scalebox{0.75}{
\begin{tikzpicture}[
    node distance = 1cm,
    operation/.style = {rectangle, draw, minimum size=0.75cm, align=center},
    line/.style = {line width=0.7pt},
]

\node[operation] (and) {AND};
\node[right = 0.2cm of and] (comma1) {\Huge,};
\node[operation, right = 0.2cm of comma1] (or) {OR};
\node[right = 0.2cm of or] (comma2) {\Huge,};
\node[operation, right = 0.2cm of comma2] (mod) {$\mathsf{TH}_t$};
\node[right = 0.2cm of mod] (comma3) {\Huge,};
\node[operation, right = 0.2cm of comma3] (not) {NOT};
\node[right = 0.2cm of not] (braceright) {\Huge$\}$};

\node[right = 0cm of braceright] (comma) {\Huge,};

\node[left = 0cm of and] (braceleft) {\Huge$\{$};

\foreach \i in {-0.5,-0.25,...,0.5} {
  \draw[line] ([xshift=\i cm, yshift=0.15cm]and.north) -- (and.north);
}

\foreach \i in {-0.5,-0.25,...,0.5} {
  \draw[line] ([xshift=\i cm, yshift=0.15cm]or.north) -- (or.north);
}

\foreach \i in {-0.5,-0.25,...,0.5} {
  \draw[line] ([xshift=\i cm, yshift=0.15cm]mod.north) -- (mod.north);
}

\draw[line] ([yshift=0.15cm]not.north) -- (not.north);

\foreach \i in {-0.5,-0.25,...,0.5} {
  \draw[line] ([xshift=\i cm, yshift=-0.15cm]and.south) -- (and.south);
}
\foreach \i in {-0.5,-0.25,...,0.5} {
  \draw[line] ([xshift=\i cm, yshift=-0.15cm]or.south) -- (or.south);
}
\foreach \i in {-0.5,-0.25,...,0.5} {
  \draw[line] ([xshift=\i cm, yshift=-0.15cm]mod.south) -- (mod.south);
}
\foreach \i in {-0.5,-0.25,...,0.5} {
  \draw[line] ([xshift=\i cm, yshift=-0.15cm]not.south) -- (not.south);
}

\end{tikzpicture}
}
\label{fig:tc0}
\end{center}
\end{figure}
\noindent with $\mathsf{TH}_k$ gate defined as follows,
\begin{equation}
\mathsf{TH_t}(x) = 
\begin{cases} 
1 & \text{if } \sum_{i=1}^{n} x_i \geq t \\
0 & \text{if } \sum_{i=1}^{n} x_i < t
\end{cases}. 
\end{equation} 
\end{definition}

We introduce a class that is not standard in circuit complexity literature but holds significance for the discussions and results presented in this text.

\begin{definition} $(\NC^0[p]$ class$)$.
The $\NC^k[p]$ class is defined as the computational problems that can be solved in poly-logarithmic depth $log(n)^k$ using polynomial-many gates. The gate set for this class comprises the bounded fan-in $\mathsf{AND}$, $\mathsf{OR}$, $\mathsf{NOT}$ gates, and a single unbounded fan-in $\mathsf{Mod}_p$ gates. 
\end{definition}

A considerable number of important results in computational complexity pertain to the circuit classes we have defined, as well as their interrelations \cite{furst1984parity,Smolensky87,rossman15,arora2009computational}. Consequently, we will present the established containments for these classes,
\begin{equation}
    \NC^0 \subsetneq \AC^0,\NC^0[p] \subsetneq \AC^0[p] \subsetneq \TC^0 \subseteq \NC^1.
\end{equation}

\noindent This knowledge is crucial for comprehending the capabilities of each circuit class and discerning the position of quantum circuit classes relative to them.

Lastly, we will present the Vazirani-XOR lemma, which is of great interest to the study of parallel repeated games. This lemma is of particular interest for the analysis of constant-depth circuits and will be used to improve our classical hardness results.

\begin{lemma}\label{Xor}
(Vazirani’s XOR Lemma \cite{goldreich2011three}). Let $D$ be a distribution on $\mathbb{F}_2^m$ and $\chi_S$ denote the parity function on the set $S \subseteq [m]$, defined as $\chi_S(x) = \oplus_{i\in S} x_i$. If $|\mathbb{E}_{x\in D}[(-1)^{\chi_S(x)}]| \leq \epsilon$ for every nonempty subset $S \subseteq [m]$, then $D$ is $\epsilon \cdot 2^{m/2}$-close in statistical distance to the uniform distribution over $\mathbb{F}_2^m$.
\end{lemma}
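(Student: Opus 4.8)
The plan is to prove this via Fourier analysis over the Boolean cube $\mathbb{F}_2^m$. First I would regard the probability mass function of $D$ as a real function $D\colon \mathbb{F}_2^m \to [0,1]$ and expand it in the orthogonal basis of characters $\{(-1)^{\chi_S(x)}\}_{S\subseteq [m]}$, writing $D(x) = \sum_{S\subseteq[m]} \widehat{D}(S)\,(-1)^{\chi_S(x)}$ with the convention $\widehat{D}(S) = \tfrac{1}{2^m}\sum_{x} D(x)(-1)^{\chi_S(x)} = \tfrac{1}{2^m}\,\mathbb{E}_{x\sim D}[(-1)^{\chi_S(x)}]$. The uniform distribution $U$ has $\widehat{U}(\emptyset) = \tfrac{1}{2^m}$ and $\widehat{U}(S)=0$ for $S\neq\emptyset$; and since $D$ is a distribution, $\widehat{D}(\emptyset) = \tfrac{1}{2^m}$ as well. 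Hence $g := D - U$ satisfies $\widehat{g}(\emptyset) = 0$ and $\widehat{g}(S) = \widehat{D}(S)$ for every nonempty $S$.

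Second, I would bound the statistical distance $\mathrm{SD}(D,U) = \tfrac12\sum_x |g(x)|$ by passing to $\ell_2$. Cauchy--Schwarz gives $\sum_x |g(x)| \le 2^{m/2}\big(\sum_x g(x)^2\big)^{1/2}$, and Parseval's identity in the chosen normalization reads $\sum_x g(x)^2 = 2^m \sum_{S} \widehat{g}(S)^2 = 2^m\sum_{S\neq\emptyset}\widehat{D}(S)^2 = \tfrac{1}{2^m}\sum_{S\neq\emptyset}\mathbb{E}_{x\sim D}[(-1)^{\chi_S(x)}]^2$. The hypothesis bounds each term of the last sum by $\epsilon^2$, and there are fewer than $2^m$ nonempty sets $S$, so $\sum_x g(x)^2 \le \epsilon^2$. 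Chaining the two inequalities yields $\sum_x|g(x)| \le \epsilon\,2^{m/2}$, hence $\mathrm{SD}(D,U) \le \tfrac12\,\epsilon\,2^{m/2} \le \epsilon\,2^{m/2}$, which is the asserted bound.

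There is no real obstacle: the whole argument is essentially three lines once the normalization is pinned down, using only orthonormality of the characters, Parseval, and Cauchy--Schwarz. The one thing to be careful about is bookkeeping of conventions; I would fix $\widehat{f}(S) = 2^{-m}\sum_x f(x)(-1)^{\chi_S(x)}$ at the outset so that Parseval becomes $\sum_x f(x)^2 = 2^m\sum_S \widehat{f}(S)^2$ and the identity $\widehat{D}(S) = 2^{-m}\,\mathbb{E}_{x\sim D}[(-1)^{\chi_S(x)}]$ is immediate, and I would note explicitly that the statement's quantity $\epsilon\cdot 2^{m/2}$ is an upper bound on the statistical distance which our estimate meets with a factor of $2$ to spare (if one reads ``close'' as $\ell_1$ distance instead, the estimate is then tight up to that factor). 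Since this is a classical lemma one could equally invoke \cite{goldreich2011three}, but the self-contained proof above is the one I would record.
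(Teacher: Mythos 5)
Your proof is correct: the normalization of the Fourier coefficients, the identity $\widehat{D}(S)=2^{-m}\mathbb{E}_{x\sim D}[(-1)^{\chi_S(x)}]$, Parseval, and the Cauchy--Schwarz step all check out, and the final bound $\mathrm{SD}(D,U)\le\tfrac12\epsilon\,2^{m/2}\le\epsilon\,2^{m/2}$ follows. Note that the paper itself gives no proof of this lemma---it is stated as a known result with a citation to \cite{goldreich2011three}---and your argument is exactly the standard Fourier-analytic proof found there, so there is nothing to reconcile.
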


\subsection{Quantum circuit classes}

The initial quantum circuit class under examination describes the quantum analog of $\NC^0$ and comprises a gate set defined by a collection of qudit gates that have bounded fan-in.  This includes, for instance, standard universal gate sets like the Clifford+T gate set.

\begin{definition} [$\fQNC^0$]
For $d \in \mathbb{N}$, let $\mathcal{H}^d$ denote a $d$-dimensional Hilbert space.  We define the $\QNC^i$ class as the set of quantum circuits composed of bounded fan-in quantum gates acting on qudits in $\mathcal{H}^d$, with depth $O(\log^i (n))$, and a polynomial number of gates concerning the size of the input string $n$.
\end{definition}

Additionally, we consider access to quantum advice, which is a quantum state that depends on the input size but not on the input itself. In particular, we examine $\fQNC^0/\mathsf{qpoly}$, which consists of $\fQNC^0$ circuits augmented with polynomial-size quantum advice. We also explore extensions of certain constant-depth circuit classes, including unbounded fan-in modular operations. For instance, the class $\iQNC^0[p]$ combines $\iQNC^0$ circuits with an additional quantum modular gate defined as follows,   
\begin{equation}
\mathsf{qMOD}_p \ket{x_0, x_1, \hdots, x_n} := \ket{x_0+(x_1 +\hdots +x_n) \mod p,x_1,\hdots,x_n\mod p}.
\end{equation}
\noindent Finally, we will use the suffix $[p]_c$ to denote the inclusion of a classical unbounded fan-in modular gate $\MOD_p$ in these circuit classes.

The next class to be analyzed will expand on the $\iQNC^0$ circuit class by incorporating different multi-qudit gates with the intent of introducing the characterization of the quantum counterpart of $\AC^0$.

\begin{definition}[$\fQAC^0$]
For $d \in \mathbb{N}$, let $\mathcal{H}^d$ denote a $d$-dimensional Hilbert space.  We define the $\QAC^i$ class as the set of quantum circuits composed of bounded fan-in quantum gates acting on qudits in $\mathcal{H}^d$ , along with qudit Toffoli gates that have a polynomial number of control qudits. These circuits have depth $O(\log^i (n))$, and a polynomial number of gates concerning the size of the input string $n$.
\end{definition}

In addition, we define $\QAC_{\mathsf{cf}}^0$ as the circuit class describing $\QAC^0$ circuits with auxiliary
mid-circuit measurements and classical fanout. This additional classical fanout not only ensures a fair comparison between the quantum and classical constant-depth circuit classes—since the latter have access to this resource—but also allows us to study the additional capability of copying classical information resulting from mid-circuit measurements an arbitrary number of times, which is a natural computational operation. Furthermore, this does not conflict with the no-cloning theorem, as it operates solely on classical data, and it is a straightforward operation to implement in hardware, in contrast with its quantum counterpart.

\subsection{Quantum computation with qudits}\label{operations}

In this preliminary section, we present the qudit gates that will be employed in subsequent sections of the text to construct the quantum circuits essential for our main proofs.

The first qudit operation that we will define will be a generalized version of the qubit $X$ gate.

\begin{definition}[X qudit gate]
The $\mathsf{X}$ gate for qudits is defined as follows, 
\begin{equation}
    \mathsf{X}_d  \ket{m}:= \ket{m+1 \mod d} \ .
\end{equation}
\end{definition}

\noindent Then a rotation about the Z axis will be defined as follows,
\begin{definition}[Rotation Z qudit gate]
The rotation Z operator for qudits is defined as follows, 
\begin{equation}
    \mathsf{R}_Z^d(\phi) := \sum_{j=0}^{d-1} e^{i(1-sgn(d-1-j))\phi} \ket{j}\bra{j} \ .
\end{equation}
\end{definition}

\noindent In addition, we will define and make use of a more complex rotation Z operator.

\begin{definition}[Generalized rotation Z qudit gate]
The generalized rotation Z operator for qudits is defined as follows, 
\begin{equation}
\mathsf{GR}_Z^d(\phi) := \sum_{j=0}^{d-1} e^{i(\phi j)} \ket{j}\bra{j} \ .
\end{equation}
\end{definition}

\noindent The last single qudit operation will be the qudit Hadamard gate.

\begin{definition}[Fourier gate]
The Fourier gate is defined as follows
\begin{equation}
    \mathsf{F}_d \ket{m} = \frac{1}{\sqrt{d}} \sum_{n \in \mathbb{Z}_d} \omega^{mn} \ket{n},
\end{equation}
\noindent with $\omega=e^{i\frac{2\pi}{d}}$.
\end{definition}

Now the respective two qudit gates of interest are the qudit equivalent of the 
$\mathsf{CNOT}$ gate and the controlled rotation Z gate. 

\begin{definition}[$\mathsf{SUM}$ gate]
 The $\mathsf{SUM}$ gate is defined as follows,
\begin{equation}
    \mathsf{SUM} \ket{n}\ket{m}  = \ket{n}\ket{n+m} .
\end{equation}
\end{definition}

\begin{definition}[Controlled generalized rotation Z qudit gate]
The $\mathsf{CGR}_Z^d$ gate is defined as follows,
\begin{equation}
\mathsf{CGR}_Z^d \ket{n}\ket{m}  = \ket{n} (\mathsf{GR}_Z^d)^n \ket{m}.
\end{equation}
\end{definition}

\noindent The last two multi-qudit operations are the following. 

\begin{definition}[$\mathsf{fanout}_p$]
The $\mathsf{fanout}_p$ gate is defined as follows,
\begin{equation}
\mathsf{fanout}_p \ket{x_0, x_1, \hdots, x_n} := \ket{x_0,(x_0+x_1)\mod p,\hdots,(x_0+x_n)\mod p},
\end{equation}
\noindent and its inverse $\mathsf{fanout}_p^{-1}$ can be implemented with $(\mathsf{fanout}_p)^{p-1}$.
\end{definition}

\begin{definition}[$\mathsf{qMOD}_p^{-1}$]
The $\mathsf{qMOD}_p^{-1}$ gate is defined as follows,
\begin{equation}
\mathsf{qMOD}_p^{-1} \ket{x_0, x_1, \hdots, x_n} := \ket{x_0+p-(x_1 +\hdots +x_n) \mod p,x_1,\hdots,x_n},
\end{equation}
\noindent and can be defined and implemented using $(\mathsf{qMOD}_p)^{p-1}$.
\end{definition}

Finally, we also define two key orthogonal bases for qudit states that will be used in our works. The first one is an extension of the qubit X-basis for qudits.
\begin{definition} [Qudit orthogonal X-basis]
The qudit orthogonal X-basis can be described by the set of states, 
\begin{equation}
\ket{X_d^m}=\frac{1}{\sqrt{d}} \sum_{j=0}^{d-1} \omega^{j\cdot m}\ket{j}\ 
\end{equation}
\noindent with $\omega=e^{\frac{i 2\pi}{d} }$ and $m \in \{0,1,..., d-1\}$.
\end{definition}

\noindent The second orthogonal basis is tailored specifically for qudit GHZ states and will be necessary for the analysis of the quantum circuits we will present. 

\begin{definition}[Qudit-GHZ orthogonal X-basis] 
The qudit-GHZ orthogonal X-basis can be defined by the set of states, 
\begin{equation}
\ket{\mathsf{GHZ}_{d,n}^m}=\frac{1}{\sqrt{d}} \sum_{j=0}^{d-1} \omega^{j\cdot m}\ket{j}^{\otimes n}\ 
\end{equation}
\noindent with $\omega=e^{\frac{i 2\pi}{d} }$ and $m \in \{0,1,...,d-1\}$.
\end{definition}

Finally, we state two technical lemmas regarding these bases that are needed in our results.



\begin{lemma}\label{fullbasis}
$
\braket{X_d^m|X_d^n}=\braket{\mathsf{GHZ}_d^m|\mathsf{GHZ}_d^n}= \delta_{m,n}$. 
\end{lemma}
\begin{proof}
All the elements of the qudit orthogonal X-basis can be represented as, 
\begin{equation}
\ket{X_d^m}=\frac{1}{\sqrt{d}} \sum_{i=0}^{d-1} \omega^{i\cdot m}\ket{i}\ = \mathsf{F}_d \ket{m}\ .
\end{equation}
This yields that,  
\begin{equation}
\braket{X_d^m|X_d^n} = \bra{m} \mathsf{F}_d \mathsf{F}_d^{\dagger} \ket{n}.
\end{equation}
\noindent Combining this with the fact that the $\mathsf{F}_d^{\dagger} \mathsf{F}_d\ket{m}=\ket{m} $, we obtain that $\braket{X_d^m|X_d^n}= \delta_{m,n}$. The same result holds for the qudit-GHZ orthogonal X-basis, given that the bases are simply a tensor product over $n$ elements of the same bases as in the qudit X-basis.
\end{proof}

\begin{lemma}\label{projection}
$\mathsf{F}_d^{\otimes n}\ket{\mathsf{GHZ}_{d,n}^m} = \frac{1}{\sqrt{d^{n-1}}} \sum\limits_{\substack{x \in \mathbb{F}_d^n \\  |x| \mod d= -m }} \ket{x}$.
\end{lemma} 

\begin{proof}
By considering the effect of the operator on the basis states we determine that,
\begin{align}
\mathsf{F}_d^{\otimes n}\ket{\mathsf{GHZ}_d^m}&=\frac{1}{\sqrt{d}} \sum_{i=0}^{d-1} \big( \omega^{i\cdot m} \mathsf{F}_d^{\otimes n}\ket{i}^{\otimes n}\big )\\
&=\frac{1}{\sqrt{d}} \sum_{i=0}^{d-1} \omega^{i\cdot m} \big( \frac{1}{\sqrt{d^{n}}}\sum_{x \in \{0,1,...,d-1\}^n} \omega^{i\cdot |x|} \ket{x} \big ) \\
&=\frac{1}{\sqrt{d^{n+1}}} \sum_{x \in \{0,1,...,d-1\}^n} \sum_{i=0}^{d-1} \omega^{i\cdot (m+|x|)}  \ket{x} \ 
\end{align}

\noindent with $|x|=\sum_{i=0}^{n-1} x_i \mod d$. This implies that for all the inputs $x$ for which $m+|x| \mod d=0$ the value of $\omega^{i\cdot (m+|x|)}=1$. In contrast, for inputs $x$ for which $m+|x| \mod d \neq 0$ we will use the geometric progression to demonstrate first that the sum of all roots of unity is equal to zero, 
\begin{equation}
    \sum_{i=0}^{d-1} \omega^i = \frac{1-\omega^d}{1-\omega}=0\ .
\end{equation}
\noindent Then, with one more step, we can show that the same is true for the amplitudes of the inputs considered, 
\begin{align}
\sum_{i=0}^{d-1} \omega^{i\cdot (m+|x|)} =  \frac{1-\omega^{d\cdot (m+|x|)}}{1-\omega^{m+|x|}} = 0 \ .
\end{align}

Finally, this implies that only for values that $m+|x| \mod d=0$ the amplitude of the respective basis states $\ket{x}$ are non-zero. This produces exactly the states, \begin{equation}
\frac{1}{\sqrt{d^{n-1}}} \sum_{x \in X} \ket{x},\ with\ X=\big \{ y\  |\ y\in \{0,1,...,d-1 \}^n ,\  |y| \mod d= -m \big\} \ .
\end{equation}
\end{proof}

\section{Shallow depth quantum circuits with finite-size gate sets}\label{fqncs}

In this section, we show new separations between quantum shallow circuits and classical ones. Moreover, in \Cref{sec:f-upper-bounds}, we prove the quantum upper bounds, and in \Cref{sec:f-lower-bounds}, we prove the classical lower bounds. We combine these results in \Cref{sec:f-separations} in order to achieve our new separations. Finally, in \Cref{sec:f-limitations}, we discuss the limitations of our techniques.

\subsection{Quantum upper bounds}\label{sec:f-upper-bounds}

In this subsection, we introduce lower bounds for the probability that $\fQNC^0$ circuits with quantum advice and $\QAC_{\mathsf{cf}}^0$ circuits can solve a type of relation problem, which we will define as modular relation problems and subsequently use for the intended separations.

\begin{definition}[Modular relation problem]\label{defmod}
The modular relation problem $\mathcal{R}_{q,p}^m:\{0,1\}^n \rightarrow \{0,1\}^{m} $ is defined as
\begin{equation}
\mathcal{R}_{q,p}^m (x) = \left \{ y\ \big |\ y \in \mathbb{F}_2^{m} ,\ |y| \pmod{q} = 0 \text{ iff } 
|x| \pmod{p} = 0
\right \}.
\end{equation} 
\end{definition}

This class of problems subsumes the relation problem used in \cite{Bravyi17,Watts19}, which uses $\mathcal{R}_{2,4}^{o(n^2)}$ for the $\NC^0$ and the $\AC^0$ separations, and \cite{Watts19}, that uses $\mathcal{R}_{2,3}^{o(n^2)}$ for their $\AC^0[2]$ separation.

\subsubsection{Shallow depth quantum circuits with advice}
\label{quantcirc}

In this subsection, we will introduce quantum circuits with a quantum advice that compute $\mathcal{R}_{q,p}^{q\cdot n}$ with a one-sided error. 

\begin{theorem}\label{quantumcirc}
For all fixed and distinct primes $p$ and $q$, 
$\mathcal{R}_{q,p}^{q\cdot n}$ can be solved in $\fQNC^0/\mathsf{qpoly}$ by a circuit over qudits of dimension $q$, with one-sided error at most $\frac{1-\cos(\frac{2\pi}{p})}{q}$.
\end{theorem}
\begin{proof}
The circuit for $\mathcal{R}_{q,p}^{q\cdot n}$ problem is described in Figure \ref{fig:quditcircuits}.

The initial state is $\ket{x}\ket{\mathsf{GHZ}_{q,n}^0}$, where the GHZ state is the quantum advice of the circuit. Then, we apply a layer of $n$ control-$Z$ rotations on the $n$ qubits with a phase equal to $\frac{2\pi}{p}$. The  $i$-th bit of the input $x$ acts as the control qubit to the control-$Z$ with the target being the $i$-th qubit of the GHZ state.

The first register is then
\begin{equation}
 \ket{\psi^{(1)}}= \frac{1}{\sqrt{q}} \Big( \sum_{i=0}^{q-2} \ket{i}^{\otimes n} + e^{i\frac{2\pi|x|}{p}  }\ket{q-1}^{\otimes n} \Big) 
 = \sum_{i=0}^{q-1} c_i \ket{\mathsf{GHZ}_{q,n}^i},
\label{eq:decomposition-ghz}
\end{equation}
\noindent with the second equality being the decomposition of $\ket{\psi^{(1)}}$ in the qudit-GHZ orthogonal X-basis. 

If we apply $F^{\otimes n}_q$ to $\ket{\psi^{(1)}}$, we have
\begin{align*}
\ket{\psi^{(2)}} = \mathsf{F}_q^{\otimes n}\ket{\psi^{(1)}} &=  \sum_{i=0}^{q-1} c_i\ (\mathsf{F}_q^{\otimes n}\ket{\mathsf{GHZ}_{q,n}^i})=
\sum_{i=0}^{q-1} c_i
\sum_{\substack{z \in [q]^n \\ |z| \mod q= -i}} \ket{z},
\end{align*}
\noindent where the last equality comes from \Cref{projection}.

 Our goal is to compute the probability of measuring this state and get an output $y$ such that $|y| \pmod{p} = 0$. For that, notice that $c_0$  is:
\begin{align}
 \braket{\mathsf{GHZ}_{q,n}^0|\psi^{(1)}} &=  \frac{1}{\sqrt{q}} \sum_{i=0}^{q-1} \bra{i}^{\otimes n} \frac{1}{\sqrt{q}} \Big( \sum_{i=0}^{q-2} \ket{i}^{\otimes n} + e^{i\frac{2\pi|x|}{p}}\ket{q-1}^{\otimes n} \Big) = \frac{q-1+\cos(\frac{2\pi |x|}{p})}{q}\ . \label{eq:value_c0}
\end{align}

We then divide that analysis into two cases:
\begin{enumerate}
    \item  If $|x|\mod p = 0$, $(c_0)^2$ equals 1. Consequently, measuring $\ket{\psi_2}$ leads to a string $y$ such that $|y|\mod q=0$. 
    \item If $|x|\mod p\neq 0$, we have $c_0^2 \leq \frac{q-1+\cos(\frac{2\pi |x|}{p})}{q}$. This implies that with a probability of least $\frac{1-\cos(\frac{2\pi}{p})}{q}$, 
it holds that  $|y| \mod q \neq 0$. 
\end{enumerate}

The remaining step consists of mapping $y \in [q]^n$ to bit-strings $y' \in \{0,1\}^{q\cdot n}$ such that the Hamming weight is preserved. This can be done with extra ancilla qubits by mapping each $y_i \in [q]$ to the bit-string $1^{y_i}0^{q-y_i}$. We denote the circuit that performs such an operation as $C(q,2)$.
\end{proof}

\begin{figure}
    \centering
    \includegraphics[scale=0.38]{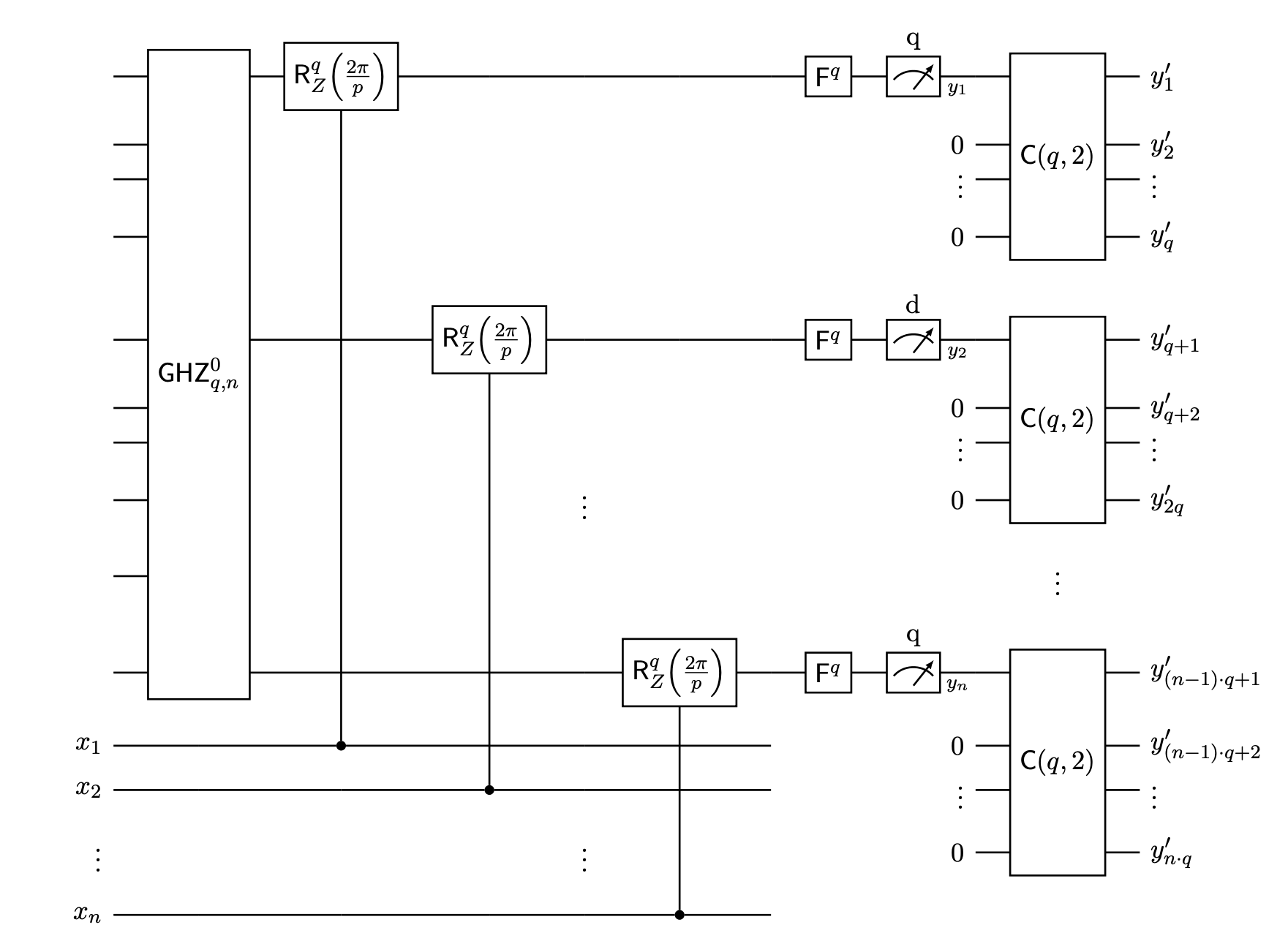}
    \caption{Parameterized quantum circuit class for values $q$, $p$, and $n$, incorporating the advice quantum state $\ket{\mathsf{GHZ}_{q,n}^0}$, which solves the $\mathcal{R}_{q,p}^{q\cdot n}$ with bounded one-sided error.}
    \label{fig:quditcircuits}
\end{figure}

\subsubsection{Shallow depth quantum circuits with Toffoli gates}\label{fqac} 

We show now that quantum circuits of constant depth built with multi-qudit controlled Toffoli gates along with intermediate measurements and unbounded classical fanout can generate qudit-GHZ states, demonstrating that $\fQAC_{\mathsf{cf}}^0$ (without advice) can solve $\mathcal{R}_{q,p}^{q\cdot n}$ with a one-sided error.

To achieve these qudit GHZ states, we use a specialized entanglement structure built upon poor-man's cat states, as referenced in \cite{Watts19,Watts23,Liu2022depthefficient}. Specifically, poor-man's cat states are formulated using graphs where the number of vertices is equal to the state's qubit count, and the edge configuration guarantees a unique path between all vertices.  In our work, we fix this graph to balanced binary trees $\mathcal{B}_n$ and extend these quantum states to higher dimensions.

\begin{definition}[Binary tree-structured poor-man's qudit state).]\label{poorghz}
Let $q$ be a prime. We consider the balanced binary tree $\mathcal{B}_n=(V,E)$, where $V = \{1,...,n\}$ and $E = \{\{i, \left\lfloor \frac{i}{2} \right\rfloor\} : i \in \{2,...,n\}\}$.  As an example, we depict $\mathcal{B}_6$ in \Cref{fig:def-binarytree}.

We define a binary tree-structured poor-mans qudit state $\ket{\mathsf{BPM}_{q,n}}$ as follows
\begin{equation}
\ket{\mathsf{BPM}_{q,n}}=\frac{1}{\sqrt{q^{n}}} \sum_{e_2,...,e_{n},v_1 \in \mathbb{F}_q} \ket{e_2,...,e_{n}}\ket{v_1,v_2,v_3,\hdots,v_n},
\end{equation} 
where each $v_i$, $i \in \{2,...,n\}$  is recursively defined as
\begin{align}
v_i = e_i - v_{\lfloor \frac{i}{2} \rfloor}.
\end{align}

\end{definition}

\begin{figure}[h]
\begin{center}
\begin{tikzpicture}[level distance=1.33cm,
  level 1/.style={sibling distance=6cm},
  level 2/.style={sibling distance=3cm}]
\node (root) {$1$}
  child {node (l) {$2$}
    child {node (ll) {$4$} 
      edge from parent node[left] {$e_{3}$}
    }
    child {node (lr) {$5$} 
      edge from parent node[right] {$e_{4}$}
    }
    edge from parent node[left] {$e_{1}$}
  }
  child {node (r) {$3$}
    child {node (rl) {$6$}
      edge from parent node[left] {$e_{5}$}
    }
    edge from parent node[right] {$e_{2}$}
  };
\end{tikzpicture}
\end{center}
\caption{Representation of $\mathcal{B}_6$.}
\label{fig:def-binarytree}
\end{figure}

We now show that these states can be created using constant-depth quantum circuits.

\begin{lemma}\label{poorconst}
For $n \in \mathbb{N}$, $\ket{\mathsf{BPM}_{q,n}}$ can be prepared in $\fQNC^0$ with a quantum circuit over qudits of dimension $q$.
\end{lemma}
\begin{proof}
For creating this state, we start with a single qudit for each vertex and edge in $\mathcal{B}_n$, i.e., we have $|V|+|E|$ qudits, where the $i$-th qudit, for $1 \leq i \leq n-1$ corresponds to the edge  $\{i+1,\lfloor \frac{i+1}{2} \rfloor \}$ and the $i$-th qudit, for $n \leq i \leq 2n-1$ corresponds to the vertex $i-n+1$. We label qudits by their corresponding  vertices/edges.

We start by applying the Fourier gate $\mathsf{F}_q$ to each of the vertex qudits and then, for edge $e = \{u,v\} \in E$ we apply a $\mathsf{SUM}_{u,e}$ and  $\mathsf{SUM}_{v,e}$.  The overall state after these operations is:
\begin{align}
&\sum_{v_1,...,v_n \in \mathbb{F}_q} \frac{1}{\sqrt{q^{n}}} \ket{(v_1+v_2),(v_1 + v_3)..., (v_n + v_{\frac{n}{2}})}\otimes \ket{v_1,v_2,v_3,\hdots,v_n} \\
&= \sum_{e_1,v_1,v_3...,v_n \in \mathbb{F}_q} \frac{1}{\sqrt{q^{n}}} \ket{e_1,(v_1 + v_3)..., (v_n + v_{\frac{n}{2}})}\otimes \ket{v_1,e_1-v_1,v_3,\hdots,v_n} \\
&= \sum_{e_1,e_2,v_1,v_4...,v_n \in \mathbb{F}_q} \frac{1}{\sqrt{q^{n}}} \ket{e_1,e_2..., (v_n + v_{\frac{n}{2}})}\otimes \ket{v_1,e_1-v_1,e_2 - v_1, v_4,\hdots,v_n} \\
&... \\
&\sum_{e_1,...,e_{n-1},v_1 \in \mathbb{F}_q} \frac{1}{\sqrt{q^{n}}} \ket{e_1,...,e_{n-1}}\otimes \ket{v_1,e_1-v_1,e_2 - v_1, \hdots,e_{n-1}-v_{\frac{n}{2}}}\\
&= \ket{\mathsf{BPM}_{q,n}},
\end{align}
where all operations are $\pmod{q}$ and all the equalities hold by rewriting the elements in $\mathbb{F}_q$.

Finally, given that the $\mathsf{F}_q$ gates are all applied in parallel to the vertex qudits, and the chromatic number of a balanced binary tree plus one is equal to 3, all the $\mathsf{SUM}_{i,j}$ gates can be parallelized in order to be implemented in constant depth.
\end{proof}

We now show that these states can be easily converted to GHZ states.

\begin{lemma}\label{reduc}
There is a $\fQAC_{\mathsf{cf}}^0$ circuit $C_n$ over qudits of dimension $q$  that maps $\ket{\mathsf{BPM}_{q,n}}$ to the n-qudit state $\ket{\mathsf{GHZ}_{q,n}^0}$.
\end{lemma}

\begin{proof}
In the circuit $C_n$, we first measure the edge qudits leading to the output string $e_1...e_{n-1} \in \mathbb{F}^{n-1}_q$. The collapsed quantum state is 

\begin{equation}
\ket{e_1,...,e_{n-1}} \otimes \frac{1}{\sqrt{q}}\sum_{v_1 \in \mathbb{F}_q}  \ket{v_1,v_2,..., v_n}.
\end{equation} 
where $v_i = e_i - v_{\lfloor \frac{i}{2} \rfloor}$. Notice that by opening this recurrence relation, we have that

\begin{align}
    v_i = \sum_{j = 0}^{depth(i)-1} (-1)^{depth(i) - depth(q^{j}(i))} e_{q^{j}(i)} + (-1)^{depth(i)+1} v_1,
\end{align}
where $depth(i)$ denotes the depth of node $i$ in the tree and $q^j(i)$ denotes the $j$-th predecessor of $i$ in the tree.

We notice that for each $i$, the value $c_i := \sum_{j = 0}^{depth(i)-1} (-1)^{depth(i) - depth(q^{j}(i))} e_{q^{j}(i)}$ can be computed by
$poly(n)$-size $\mathsf{AC}^0$ circuits on input, since it depends on at most $\log(n)$ values $e_j$. In this case, a $poly(n)$-size $\AC^0$ exists that computes $c_1...c_n$. 
Given these corrections, we can apply the gate $\mathsf{X}^{c_i}$ to the $i$-th vertex qubit (which is a classically-controlled $\mathsf{X}$ gate), and the resulting state is:

\begin{equation}\label{xcorr}
\ket{e_1,...,e_{n-1}} \otimes \frac{1}{\sqrt{q}}\sum_{v_1 \in \mathbb{F}_q} 
\bigotimes_{i = 1}^n \ket{(-1)^{depth(i)}v_1}.
\end{equation}

Finally, we can apply the following operation to the qudits corresponding to the vertex of odd depth:
$V\ket{m} = \ket{p - m}$ and the result state is 

\begin{equation*}
\ket{e_1,...,e_{n-1}} \otimes \frac{1}{\sqrt{q}}\sum_{v_1 \in \mathbb{F}_q} \ket{v_1}^{\otimes n}\qedhere .
\end{equation*}

\end{proof}

The reduction we previously discussed cannot be achieved using $\iQNC^0$ circuits since they cannot implement the essential corrections needed to transition from the poor man's cat states to the GHZ states. In contrast, we explore the additional computational power of $\fQAC_{\mathsf{cf}}^0$ circuits and show that they can produce the qudit GHZ states through a process involving mid-circuit measurements. As a consequence, we can show the following.

\begin{theorem}\label{quantumACcirc}
For any distinct primes $p$ and $q$, a $\fQAC_{\mathsf{cf}}^0$ circuit over qudits of dimension $q$ that can solve $\mathcal{R}_{q,p}^{q\cdot n}$ with one-sided error at most $\frac{1-\cos(\frac{2\pi}{p})}{q}$.
\end{theorem}
\begin{proof}
Combining Lemma \ref{poorconst} and Lemma \ref{reduc}, we obtain that $\fQAC_{\mathsf{cf}}^0$ circuits can create qudit GHZ states of the type $\ket{\mathsf{GHZ}_{q,n}^0}$. In addition, all the qudit $\fQNC^0$ circuits of Theorem \ref{quantumcirc} are contained in $\fQAC_{\mathsf{cf}}^0$. Therefore, all modular relation problems of the form $\mathcal{R}_{q,p}^{q\cdot n}$ fixed prime numbers $p$ and $q$ can be solved by $\fQAC_{\mathsf{cf}}^0$ circuits.
\end{proof}

\subsection{Classical lower bounds}
\label{sec:f-lower-bounds}

In this subsection, we prove the classical lower bounds for  $\mathcal{R}_{p,q}^{q\cdot n}$.  For that, we reduce it to computing $\MOD_p$ function and use the Razborov-Smolensky separations (Theorem \ref{Acseparations}) for $\AC^0[q]$. Then, we consider the 
the parallel repetitions of the previous relation problems, defined as follows.

\begin{definition}[Parallel-k modular relation problem]
The modular relation problems $k$-$\mathcal{R}_{q,p}^m:\{0,1\}^{n\cdot k}  \rightarrow \{0,1\}^{m \cdot k} $  are defined as,
\begin{equation*}
k\text{-}\mathcal{R}_{q,p}^m (x_1,...,x_k) = \Big \{ (y_1,...,y_k) \ \Big |\forall x_i \in \mathbb{F}_2^n, \ y_i \in \mathbb{F}_2^{m},\ |y_i| \pmod{q} = 0 \text{ iff } 
|x_i| \pmod{p} = 0 \Big \}.
\end{equation*} 
\end{definition}

Them, we show that $\AC^0[q]$ circuits can solve them with only a negligible small probability at best by using the Vazirani-XOR lemma.

\begin{lemma}\label{relationac}
Let $p$ and $q$ be two distinct primes. Any depth $d$ circuit in $\AC^0[q]$ that solves the $\mathcal{R}_{q,p}^{q\cdot n}$ problem with one-sided error probability at most $1/n^{\Omega (1)}$ has size $\exp(n^{1/(2d-\Theta(1)})$.
\end{lemma}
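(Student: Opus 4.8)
The plan is to reduce the computation of the $\MOD_p$ function on $n$ bits to solving $\mathcal{R}_{q,p}^{q\cdot n}$, and then invoke the Razborov--Smolensky lower bound (Theorem~\ref{Acseparations}) for computing $\MOD_p$ by $\AC^0[q]$ circuits when $p \neq q$ are distinct primes. First I would set up the reduction: given an input $x \in \{0,1\}^n$ for which we want to decide whether $|x| \equiv 0 \pmod p$, feed $x$ to the hypothetical $\AC^0[q]$ circuit $C$ solving $\mathcal{R}_{q,p}^{q\cdot n}$, obtaining an output string $y \in \{0,1\}^{q\cdot n}$. By the definition of the relation, $|y| \equiv 0 \pmod q$ iff $|x| \equiv 0 \pmod p$. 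So it suffices to compute $|y| \bmod q$ from $y$; but this is exactly a $\MOD_q$ computation on $q\cdot n$ bits, which can be done by a single unbounded fan-in $\MOD_q$ gate (which $\AC^0[q]$ has for free) plus $O(1)$ extra depth and $O(1)$ size overhead. Composing $C$ with this gate yields an $\AC^0[q]$ circuit of depth $d + O(1)$ and essentially the same size that computes $\MOD_p$ on $n$ bits.

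The second ingredient is the one-sided error handling. Since $C$ solves the relation problem with \emph{one-sided} error $\le 1/n^{\Omega(1)}$, I would argue that in the case $|x| \equiv 0 \pmod p$ the output $y$ always satisfies $|y| \equiv 0 \pmod q$ (zero error on that side), and in the other case the output is correct with probability $1 - 1/n^{\Omega(1)}$. This gives a randomized $\AC^0[q]$ circuit computing $\MOD_p$ with one-sided error $1/n^{\Omega(1)}$; by a standard averaging (fixing the random bits to the best choice, or equivalently noting the error probability is small enough to union-bound) one extracts a \emph{deterministic} $\AC^0[q]$ circuit of the same size and depth $d+O(1)$ that agrees with $\MOD_p$ on a $1 - 1/n^{\Omega(1)}$ fraction of inputs --- which is enough, since the Razborov--Smolensky argument only needs approximate (correlation) agreement with $\MOD_p$, not exact computation.

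Finally I would plug the resulting circuit into Theorem~\ref{Acseparations}: any depth-$d'$ circuit in $\AC^0[q]$ that approximates $\MOD_p$ well (for $p \neq q$ prime) must have size $\exp(n^{1/(2d'-\Theta(1))})$. With $d' = d + O(1)$, this yields size $\exp(n^{1/(2d - \Theta(1))})$ for the original circuit $C$, as claimed, after absorbing the additive constants into the $\Theta(1)$ term in the exponent.

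The main obstacle I anticipate is bookkeeping the error model carefully enough that the Razborov--Smolensky hypothesis is genuinely met: one must be precise about what ``one-sided error'' means for a \emph{relation} problem (the circuit outputs \emph{some} valid $y$, and the guarantee is only about the $\bmod$-$q$ residue class of $|y|$, not about $y$ itself), and about the transition from a randomized solver to a deterministic circuit that correlates with $\MOD_p$. A minor secondary point is confirming that converting between the $q\cdot n$-bit output encoding and the $\MOD_q$ evaluation really costs only $O(1)$ additional depth and does not blow up the size, which follows since a single unbounded fan-in $\MOD_q$ gate suffices.
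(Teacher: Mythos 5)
Your core reduction---run $C$, feed its output string through a $\MOD_q$ gate, and invoke Razborov--Smolensky (Theorem~\ref{Acseparations})---is the same one the paper uses. But there is a genuine gap: you have read ``one-sided error probability at most $1/n^{\Omega(1)}$'' as ``success probability at least $1-1/n^{\Omega(1)}$ on no-instances,'' whereas the paper's proof (and the downstream uses in Corollary~\ref{decsionsep} and Lemma~\ref{paraAC}) works with the much weaker guarantee that the circuit is perfect on yes-instances and succeeds on no-instances with probability only $1/n^{\Omega(1)}$ --- an \emph{inverse-polynomially small} success probability, matching the fact that the quantum circuits of Theorem~\ref{quantumcirc} only achieve a small constant advantage on the no side. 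Under that reading, your single run of $C$ followed by one $\MOD_q$ gate produces a circuit that is wrong on no-instances with probability $1-1/n^{\Omega(1)}$; no averaging argument can then extract a circuit correlating nontrivially with $\MOD_p$, and Razborov--Smolensky gives no contradiction. The missing idea is the amplification step: the paper runs $O(n^2)$ independent copies of $C$ in parallel, applies a $\MOD_q$ gate to each output block, and combines them with a $\mathsf{NOR}$. Because the error is one-sided, a single copy detecting $|y_i|\bmod q\neq 0$ certifies a no-instance, so the failure probability drops to $(1-1/n^{\Omega(1)})^{O(n^2)}=\exp(-n^{\Omega(1)})$, at the cost of only $O(1)$ extra depth and a polynomial size blow-up. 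This amplification is precisely what makes the lemma strong enough to yield the $1/2+n^{-\Omega(1)}$ bound in Corollary~\ref{decsionsep}.

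A secondary point: you lean on an ``approximate agreement'' form of Razborov--Smolensky, but Theorem~\ref{Acseparations} as stated in the paper concerns exact computation of $\MOD_p$. After the $O(n^2)$-fold amplification the per-input error is below $2^{-n}$, so a union bound over all $2^n$ inputs lets one fix the randomness and obtain a circuit that is correct \emph{everywhere}, which is what allows the paper to apply the exact-computation theorem. With only $1-1/n^{\Omega(1)}$ agreement this derandomization is unavailable, so your route genuinely requires the stronger correlation version of the lower bound, which you would need to state and cite separately. Your observation that one must be careful about what ``one-sided error'' means for a relation problem is exactly the right instinct---the resolution is that the guarantee is asymmetric and weak on the no side, which is why the OR-amplification is indispensable rather than optional bookkeeping.
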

\begin{proof}
Let us suppose, for the purpose of a contradiction, that there exists a circuit $C$ in $\AC^0[q]$ of depth $d-3$ and size equal to $\exp(n^{1/(2d-\Theta(1))})$ that computes $\mathcal{R}_{q,p}^{q\cdot n}$, with one-sided error probability at most $\frac{1}{n^{o(1)}}$. We will assume here that $C$ computes the yes-instance with probability $1$, and the arguments follow analogously if $C$ computed the no-instances with probability $1$.

We will now prove that if such a $C$ exists, then there exists a $\AC^0[q]$ circuit $C'$ that computes the $\MOD_p$ function correctly with probability $1-negl(n)$. $C'$ consists of running $O(n^2)$ times the circuit $C$ in parallel and then applying a $\MOD_q$ gate to each output string, and then applying a $\mathsf{NOR}$  gate to the output of this layer of $\MOD_q$ gates. Figure \ref{fig:reductions} provides a representation of $C'$.

If $|x| \mod p=0$, after running $C$ on $x$, the resulting strings $y_{i}$ are such that $|y_{i}| \mod q=0$. Thus, the $\MOD_q$ gates to each of these strings lead to output $0$. Therefore, after the $\mathsf{NOR}$ gate, the result for this final bit will always be 1. 

On the other hand, if $|x| \mod p\neq 0$, with probability $1/n^{o(1)}$, the strings $y_i$ after running $C$ have the desired property $|y_i| \mod q \neq 0$. Therefore, with the same probability, the $\MOD_q$ gate applied to this string will result in the Boolean value $1$, and the probability of at least one of them being equal to $1$ is at least $1- negl(n)$. Therefore, the output of the  $\mathsf{NOR}$ gate is $0$ with probability $1- negl(n)$.

This leads to a depth-$d$ circuit  and size $\exp(n^{1/(2d-\Theta(1))})$, that computes  $\MOD_p$ in $\AC[q]$, contradicting Theorem \ref{Acseparations}. Thus, circuit $C$ with such properties cannot exist.
\end{proof}

\begin{figure}[h]
\begin{center}
\begin{tikzpicture}[font=\tiny]

    \tikzset{
        gate/.style={draw=black, fill=white, rectangle, align=center}
    }

    \def\spacing{0.3}

    \foreach \y/\label in {
        0/{$y_{m,n\cdot q}$},
        1.2*\spacing/{$\scalebox{0.7}{\vdots}$},
        2*\spacing/{$y_{m,2}$},
        3*\spacing/{$y_{m,1}$},
        4.8*\spacing/{$y_{2,n\cdot q}$},
        5.9*\spacing/{$\scalebox{0.7}{\vdots}$},
        6.6*\spacing/{$y_{2,2}$},
        7.4*\spacing/{$y_{2,1}$},
        9.4*\spacing/{$y_{1,n\cdot q}$},
        10.6*\spacing/{$\scalebox{0.7}{\vdots}$},
        11.4*\spacing/{$y_{1,2}$},
        12.4*\spacing/{$y_{1,1}$}
    }{
        \node[anchor=east] at (-0.5,\y) {\label};
    }

    \foreach \y/\ystart/\yend in {
        1.5*\spacing/3*\spacing/0,
        6.4*\spacing/8.1*\spacing/5*\spacing,
        11*\spacing/13*\spacing/9*\spacing
    }{
        \draw[gate] (1, \yend-0.2*\spacing) rectangle (2, \ystart+0.2*\spacing);
        \node at (1.5, \y) {$\MOD_q$};
    }

    \node at (1.5, 4.2*\spacing) {$\scalebox{0.7}{\vdots}$};
    \node at (-0.5, 4.2*\spacing) {$\scalebox{0.7}{\vdots}$};

    \foreach \y in {
        0,2*\spacing,3*\spacing,
        5*\spacing,6.6*\spacing,7.4*\spacing,
        9.4*\spacing,11.4*\spacing,12.4*\spacing
    }{
        \draw (-0.2,\y) -- (1,\y);
    }

    \foreach \y/\yend in {
        1.5*\spacing/5.8*\spacing,
        6.4*\spacing/6.6*\spacing,
        11*\spacing/7.4*\spacing
    }{
        \draw (2,\y) -- (2.7,\y) -- (2.7,\yend) -- (3.5,\yend);
    }

    \node at (3.5, 6.6*\spacing) {$\scalebox{0.7}{\vdots}$};

    \draw[gate] (3.5, 5.5*\spacing) rectangle (4.2, 7.9*\spacing);
    \node at (3.85, 6.6*\spacing) {$\mathsf{NOR}$};
    \draw (4.2,6.75*\spacing) -- (4.7,6.75*\spacing);

\end{tikzpicture}
\caption{Representation of a classical $\AC^0[q]$ circuit, parameterized by $q$ and $n$, that reduces a solution to the $\mathcal{R}_{q,2}^{q\cdot n}$ problem to a solution of $\MOD_2(x)$, where $x$ is the initial input string of size $n$.}
\label{fig:reductions}
\end{center}
\end{figure}

Subsequently, we show that polynomial-sized $\AC^0[q]$ circuits solve the parallel repetition of the modular relation problems with probability close to zero. To obtain this bound, we focus on the success of $\AC^0[q]$ circuits in computing the $\MOD_2$ function with one-sided error.  

\begin{corollary}\label{decsionsep}
The $\MOD_2$ function can not be solved by any one-sided error  $\AC^0[q]$ circuit with $q$ prime $q\neq 2$ with a probability higher than $1/2+1/n^{\Omega(1)}$.
\end{corollary}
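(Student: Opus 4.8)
The plan is to reduce the task of computing $\MOD_2$ (i.e.\ parity) to solving an instance of the modular relation problem $\mathcal{R}_{2,p}^{\,p\cdot n}$, using essentially the same reduction gadget as in the proof of \Cref{relationac}, but now tracking the success probability quantitatively rather than deriving a contradiction. Concretely, suppose toward a contradiction that some one-sided-error $\AC^0[q]$ circuit family $C$ of polynomial size solves $\MOD_2$ with success probability at least $1/2 + 1/n^{\Omega(1)}$. The roles of $p$ and $2$ in \Cref{defmod} can be swapped, so we work with the relation problem whose instances $x\in\{0,1\}^n$ satisfy $|x|\bmod 2 = 0$ iff the output $y$ has $|y|\bmod p = 0$; note $q$ is still a third prime distinct from both $2$ and $p$, so the Razborov--Smolensky machinery applies. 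I would first observe that a single $\AC^0[q]$ circuit computing $\MOD_2$ with one-sided error and the above success probability would, after the standard amplification-by-parallel-repetition-and-$\mathsf{NOR}$ argument from \Cref{relationac}, yield a polynomial-size $\AC^0[q]$ circuit computing $\MOD_2$ with probability $1-\mathrm{negl}(n)$.

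The key steps, in order, are: (i) fix a prime $p\neq q$ and $p\neq 2$ and recall from \Cref{relationac}'s argument that a one-sided-error solver for the relation problem boosts to a $1-\mathrm{negl}(n)$ solver for a $\MOD$ function; (ii) apply the contrapositive — a one-sided-error $\AC^0[q]$ circuit for $\MOD_2$ achieving advantage $1/n^{\Omega(1)}$ can itself be amplified (run $O(n^2)$ independent copies, output the $\mathsf{OR}$/$\mathsf{NOR}$ of the ``reject'' indicators depending on which side is exact) to a $\MOD_2$ circuit succeeding with probability $1-\mathrm{negl}(n)$, staying inside $\AC^0[q]$ with only a constant additive depth increase and polynomial size; (iii) invoke Razborov--Smolensky (Theorem~\ref{Acseparations}) which says $\MOD_2\notin\AC^0[q]$ for $q$ an odd prime — more precisely, any depth-$d$ $\AC^0[q]$ circuit computing $\MOD_2$ correctly on a $1-\mathrm{negl}(n)$ fraction of inputs (equivalently, by averaging, with probability $1-\mathrm{negl}(n)$ over a random input together with the circuit's internal randomness) must have size $\exp(n^{\Omega(1/d)})$, contradicting polynomial size; (iv) conclude that no polynomial-size one-sided-error $\AC^0[q]$ circuit can achieve advantage better than $1/n^{\Omega(1)}$, i.e.\ the stated bound.

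The main obstacle I anticipate is step (ii): making the amplification rigorous in the \emph{one-sided} error regime. With one-sided error, the circuit is always correct on one class of inputs (say all parity-$0$ strings accepted with certainty) and errs with probability at most $1/2 - 1/n^{\Omega(1)}$ on the other; running $k = \Theta(n^{1+\epsilon})$ independent trials and taking an $\mathsf{AND}$ of the ``accept'' outputs drives the one-sided failure probability on the bad class down to $(1/2 - 1/n^{\Omega(1)})^k = \mathrm{negl}(n)$ while preserving perfect correctness on the good class — but one must check that ``independent trials'' is legitimate here, i.e.\ that the model genuinely allows fresh randomness per copy (or that the randomness can be hard-wired by an averaging argument fixing a good random string), and that the combined circuit remains constant-depth with only a polynomial blow-up in size. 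A secondary subtlety is ensuring the precise form of the Razborov--Smolensky statement we cite tolerates negligible error (it does — the approximate-degree lower bound is robust to a $1-\mathrm{negl}(n)$ fraction of agreement), so that step (iii) actually applies to the amplified circuit rather than only to an errorless one.
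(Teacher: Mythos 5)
Your proposal is correct and takes essentially the same route as the paper: the paper's own proof just notes that one-sidedness makes the circuit perfect on one weight class, cites Lemma~\ref{relationac} to bound the success probability on the other class by $1/n^{\Omega(1)}$, and averages over a uniform input, so your steps (ii)--(iv) amount to inlining the amplification-plus-Razborov--Smolensky argument that already constitutes the proof of that lemma. The opening detour through the relation problem $\mathcal{R}_{2,p}^{p\cdot n}$ and a third prime is unnecessary (the amplification applies directly to the Boolean function $\MOD_2$), and the subtleties you flag --- fresh randomness per parallel copy and the robustness of Theorem~\ref{Acseparations} to a negligible fraction of errors --- are present, and left implicit, in the paper's argument as well.
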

\begin{proof}
Let's assume there exists an $\AC^0[p]$ circuit that can solve the Yes instances of the $\MOD_2$ perfectly, then we know by Lemma \ref{relationac} that the  No instances can almost be solved correctly with probability $1/n^{o(1)}$. This implies any circuit of this type solves the $\MOD_2$  function over a random input w.p. at most $1/2+1/n^{o(1)}$. 
\end{proof}

Using all these tools, we can show that the class of $\AC^0[q]$ fails to compute completely parallel modular relation problems with a one-sided error for some parameter regimes. 

\begin{lemma}\label{paraAC}
Let $q \ne 2$ be a prime and $k\in \Theta (\log(n))$. No $\AC^0[q]$ circuit can solve $k$-$\mathcal{R}_{q,2}^{q\cdot n}$ on at least $\frac{1}{2}+\frac{1-\cos(\frac{2\pi}{p})}{2q}$ fraction of the parallel instances with one-sided errors and a probability higher then $n^{-\Omega(1)}$.
\end{lemma}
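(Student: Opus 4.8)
The plan is to combine the single-instance hardness from \Cref{decsionsep} with a hybrid/XOR-type argument via the Vazirani XOR lemma, exactly as announced before the statement. First I would set up the reduction: suppose toward a contradiction that some $\AC^0[q]$ circuit $C$ solves $k\text{-}\mathcal{R}_{q,2}^{q\cdot n}$ with one-sided error, outputting, on at least a $\tfrac12+\tfrac{1-\cos(2\pi/p)}{2q}$ fraction of the $k$ parallel instances, a valid answer, and doing so with probability at least $n^{-\Omega(1)}$. The key observation is that from a purported solution $(y_1,\dots,y_k)$ one extracts, for each coordinate $i$, a single bit $b_i := \mathsf{NOR}(\MOD_q(y_i))$ — i.e. the bit ``$|y_i|\equiv 0 \pmod q$'' — using the same constant-depth gadget as in \Cref{relationac}. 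Whenever $y_i$ is a valid answer for instance $x_i$, this bit equals $\MOD_2(x_i)$ (the indicator that $|x_i|\equiv 0\pmod 2$, under the one-sided convention that the yes-case is computed perfectly). So $C$ followed by this gadget yields an $\AC^0[q]$ circuit $C'$ that, on each of $k$ independent uniformly random inputs $x_1,\dots,x_k$, outputs bits $b_1,\dots,b_k$ agreeing with $\MOD_2(x_1),\dots,\MOD_2(x_k)$ on at least a $\tfrac12+\tfrac{1-\cos(2\pi/p)}{2q}$ fraction of coordinates, with non-negligible probability.

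Next I would massage this ``fractional agreement on most coordinates'' guarantee into a statement about computing the XOR $\bigoplus_{i=1}^k \MOD_2(x_i)$. The cleanest route: an $\AC^0[q]$ circuit that agrees with the correct bit on a $\tfrac12+\delta$ fraction of $k$ coordinates, combined with a random subset trick, computes $\bigoplus_i \MOD_2(x_i)$ with advantage that I need to show is still $\geq n^{-\Omega(1)}$ over a random input. Concretely: take the XOR of a uniformly random subset $S$ of the $b_i$; then $\bigoplus_{i\in S} b_i$ differs from $\bigoplus_{i\in S}\MOD_2(x_i)$ exactly when $|S \cap (\text{error set})|$ is odd, and conditioned on the error set having relative size $\leq \tfrac12-\delta$, a random subset hits it an odd number of times with probability at most $\tfrac12-\Omega(\delta)$ — this is a standard computation using $\sum_j \binom{m}{2j+1}/2^m$ type bounds, or one can invoke a biased-coin argument. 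Since $k=\Theta(\log n)$ and $\delta$ is a positive constant, and the ``good event'' (circuit succeeds) has probability $n^{-\Omega(1)}$, one gets an $\AC^0[q]$ circuit computing $\bigoplus_{i} \MOD_2(x_i) = \MOD_2(x_1 \cdots x_k)$ (on the concatenated/combined input of total length $kn = \Theta(n\log n)$) correctly with probability $\tfrac12 + n^{-\Omega(1)}$ over uniformly random input.

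Finally I would invoke the Vazirani XOR lemma together with \Cref{decsionsep}: if each of the $k$ coordinates cannot individually be computed with advantage better than $n^{-\Omega(1)}$ by $\AC^0[q]$ (which is \Cref{decsionsep}), then by the XOR lemma no $\AC^0[q]$ circuit computes the parity of the $k$ bits with advantage better than roughly $(n^{-\Omega(1)})^{?}$ — actually the direction I want is the contrapositive/amplification form: a good predictor for the XOR yields a good predictor for some single coordinate, contradicting \Cref{decsionsep}. One has to be careful that $\AC^0[q]$ is closed under the operations used (hardwiring all but one block of inputs, taking a subcircuit), which it is. Combining, we reach a contradiction, so no such $C$ exists.

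The main obstacle I anticipate is bookkeeping the quantitative losses: one must verify that after (i) the non-negligible success probability $n^{-\Omega(1)}$ of $C$, (ii) the constant-factor advantage degradation from the random-subset XOR trick applied to a $(\tfrac12+\delta)$-fractional-agreement guarantee with $k=\Theta(\log n)$ coordinates, and (iii) the loss in the Vazirani XOR lemma, the resulting advantage for a single $\MOD_2$ instance is still $n^{-\Omega(1)}$ — i.e. strictly larger than what \Cref{decsionsep} permits. The choice $k=\Theta(\log n)$ and the constant nature of $\delta=\tfrac{1-\cos(2\pi/p)}{2q}$ are precisely what make these polynomial factors line up; getting the exponents right, and correctly handling the one-sided-error convention throughout the chain of reductions (so that ``valid answer on a coordinate'' really does force $b_i=\MOD_2(x_i)$), is the delicate part. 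A secondary subtlety is that ``at least a $\tfrac12+\delta$ fraction of instances are answered validly'' is a statement about the coordinates, not a per-coordinate probability, so the random-subset argument must be conditioned correctly on the realized error set rather than averaged naively.
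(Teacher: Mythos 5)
Your setup (extract a bit $b_i=\MOD_q(y_i)$ from each coordinate, tie validity of $y_i$ to $\MOD_2(x_i)$, and play the $k=\Theta(\log n)$ guarantee against \Cref{decsionsep} via Vazirani's XOR lemma) matches the paper's strategy, and your closing observation that $\bigoplus_{i\in S}\MOD_2(x_i)=\MOD_2(\|_{i\in S}x_i)$ is exactly the reduction the paper uses to bound the subset biases. However, the pivotal middle step of your argument is false as stated. For a \emph{uniformly} random subset $S\subseteq[k]$ and any fixed nonempty error set $E$, one has $\mathbb{E}_S[(-1)^{|S\cap E|}]=0$, i.e. $\Pr_S[|S\cap E|\ \text{odd}]=\tfrac12$ \emph{exactly}, independently of $|E|$; your claimed bound $\tfrac12-\Omega(\delta)$ conditioned on $|E|\le(\tfrac12-\delta)k$ does not hold (the sum $\sum_j\binom{m}{2j+1}/2^m$ equals $\tfrac12$ for every $m\ge 1$). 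The only contribution to the advantage of the random-subset XOR predictor comes from the event $E=\emptyset$, which the hypothesis ``at least a $\tfrac12+\delta$ fraction of coordinates are valid'' does not lower-bound at all. So the chain from ``fractional agreement'' to ``advantage on a single $\MOD_2$ instance'' breaks, and the contradiction is never reached.

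The paper runs the argument in the opposite direction, which avoids this problem: it defines the distribution $\mathcal{X}$ of error bits $|x_i|+C_i'(x)\bmod 2$, bounds \emph{every} Fourier coefficient $|\mathbb{E}_{z\in\mathcal{X}}[(-1)^{\chi_S(z)}]|$ by $n^{-\Omega(1)}$ using \Cref{decsionsep} applied to the concatenated string $\|_{i\in S}x_i$, invokes Vazirani's XOR lemma in its stated (forward) direction to conclude $\mathcal{X}$ is $n^{-\Omega(1)}\cdot 2^{k/2}$-close to uniform, and then applies a Chernoff bound \emph{under the uniform distribution} to show the event ``at least a $\tfrac12+\delta$ fraction of zeros'' is rare; the triangle inequality then bounds the circuit's success probability. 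If you want to salvage your direction, you would need a biased subset (each $i\in S$ independently with constant probability $\rho<\tfrac12$, giving $\mathbb{E}_S[(-1)^{|S\cap E|}]=(1-2\rho)^{|E|}\ge n^{-O(1)}$ when $|E|=O(\log n)$) rather than a uniform one, and then track that the resulting advantage still exceeds what \Cref{decsionsep} allows. Your final paragraph also conflates Vazirani's lemma (small biases imply closeness to uniform) with a Yao-style hardness-amplification statement; no such amplification is needed, since the subset XOR is itself a single $\MOD_2$ instance.
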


\begin{proof}
Let us suppose that there exists a circuit $C \in \AC^0[q]$, where $q \neq 2$, can solve the $k$-$\mathcal{R}_{q,2}^{q\cdot n}$ with the properties of the statement. Then, we can construct a circuit $C'$ that solves the following search problem
\begin{equation}
k\text{-}\mathcal{D}_{2,2} (x_1,...,x_k)=\Big \{ (y_1',...,y_k') \ \Big |\forall x_i \in \mathbb{F}_2^n, \ y_i' \in \mathbb{F}_2,\ y_i' + |x_i| \pmod{2} = 0 \Big \},
\end{equation}
\noindent with the same success, where $C'$ is equal to $C$ with an extra layer of $\MOD_q$ gates, thereby transforming the outcome strings from $k$-$\mathcal{R}_{q,2}^{q\cdot n}$ into bits using the mapping $\MOD_q(y_i) = y_i'$.

The circuit $C'$ induces a probability distribution $\mathcal{X}$ on $\{0,1\}^k,$ corresponding on picking random $x_1,...,x_k \in \{0,1\}^n$ and outputting $k$ output bits as follows, 
\begin{align}
(|x_1|+C_1'(x_1,\hdots,x_n) \pmod{2},...,|x_k|+C_k'(x_1,\hdots,x_n) \pmod{2}).
\end{align}

Our goal now is to prove that $\mathcal{X}$ is close to a uniform distribution $\{0,1\}^k$ on half of its bits, either for all $x_i$ with even or odd Hamming weight, if we assume one-sided error. For that, we notice that we can describe for all $S\subseteq [k]$ the following expectation value as
\begin{align}
    \left|\mathbb{E}_{z \in \mathcal{X}}[(-1)^{\chi_S(z)}]\right|&= \frac{1}{2^{|S|\cdot n}}\sum_{x_1,...,x_k \in  \mathbb{F}_{2}^{n}} (-1)^{\oplus_{i \in S}(|x_i|+C_i'(x_1,\hdots,x_n)\pmod{2})}\\ &= \frac{1}{2^{|S|\cdot n}}\sum_{x_1,...,x_k \in \mathbb{F}_{2}^{n}} (-1)^{\left(\sum_{i \in S}|x_i|+\sum_{i \in S}C_i'(x_1,\hdots,x_n)\right)\pmod{2}},
\end{align}
where $\chi_S(z) = \oplus_{i\in S} z_i$.

This allows us to bind the previous value for all the subsets $S$ based on the number of times the following equation is fulfilled for uniformly selected strings $x_i$,
\begin{equation}\label{corrmod2}
 \sum_{i \in S}|x_i| \equiv \sum_{i \in S}C_i'(x_1,\hdots,x_n) \mod 2.
\end{equation}

We observe that determining the values for the outcome bits $i \in S$ such that the condition of Equation \eqref{corrmod2} is satisfied by an $\AC^0[q]$ circuit $C'$ is as hard as computing the parity function across the concatenated input strings $||_{i\in S} x_i$. This assertion is supported by the fact that a circuit, which, if capable of determining valid outcome bits under the given condition, can be reduced to one that does compute the parity of the concatenated input strings $||_{i\in S} x_i$ with a polynomial-size $\AC^0[q]$ circuit. The latter follows from the fact that polynomial size $\AC^0[q]$ circuits can effectively compute the parity of a string of length $\log(n)$. Consequently, the probability that Equation \eqref{corrmod2} is satisfied is directly related to the probability that an $\AC^0[q]$ circuit computes the parity function across the concatenated input strings $||_{i\in S} x_i$, albeit with a one-sided error. This condition is met with a maximum probability of $1/2 + n^{-\Omega(1)}/2$, as inferred from Corollary \ref{decsionsep}. Therefore, we deduce that for any characteristic function $\chi_S$, the value of $|\mathbb{E}_{x \in D}[(-1)^{\chi_S(x)}]|$ is limited by $\frac{1}{n^{\Omega(1)}}$, for all input strings of either even or odd Hamming weight.

Now, we can apply Vazirani's XOR Lemma (Lemma \ref{Xor}) to demonstrate that the distribution $\mathcal{X}$ is at most $1/n^{\Omega(1)} \cdot 2^{k/2}$ deviated in total variational distance from the uniform distribution, on those inputs. Then, we can show by using the Chernoff bound that the event $X$ described as sampling a string with at least $1/2+\frac{1-\cos(\frac{2\pi}{p})}{2q}$ of 0's from this distribution, has a probability that decreases with, 
\begin{align}
    \Pr\Big(X \Big)&<e^{-2\cdot(4/\frac{1-\cos(\frac{2\pi}{p})}{2q})^{2}(1/2)^{2}/(k^{2})} < e^{-\Omega(k^2)}\ .
\end{align}
\noindent Thus, we obtain that the distribution $\mathcal{X}$ contains the same string with probability at most, 
\begin{equation}
  n^{-\Omega(1)}\cdot2^{k/2}+  e^{-\Omega(k^2)} = n^{-\Omega(1)+k/(2\log(n))}.
\end{equation}

\noindent Finally, we conclude that the previous value is bounded by $n^{\Omega(1)}$, given that $k=\log(n)$.
\end{proof}

\subsection{Separations}
\label{sec:f-separations}
In this section, we combine the results from \Cref{sec:f-upper-bounds} and \Cref{sec:f-lower-bounds} to achieve unconditional separation between the classical and quantum circuit classes.

\begin{theorem}\label{paralsep}
For each fixed prime $q$, there exists a relation problem that cannot be solved by polynomial-size circuits $\AC^0[q]$ with success probability at least $n^{-\Omega(1)}$, whereas there exist $\fQNC^0/\mathsf{qpoly}$ and a $\fQAC_{\mathsf{cf}}^0$ circuits over qudits of dimension $q$ with a finite gate set that can solve it with probability of least $1-o(1)$.
\end{theorem}
\begin{proof}
For the case where $q = 2$, the result is derived from Theorem 6 in \cite{Watts19}. For all $q \neq 2$, we apply the lower bounds and the respective instances of the parallel modular relation problems $k$-$\mathcal{R}_{q,2}^{q\cdot n}$ discussed in \Cref{paraAC}. These instances establish that $\AC^0[q]$ circuits solve this set of problems with a probability of at most $n^{-\Omega(1)}$.

In contrast, the $\fQNC^0/\mathsf{qpoly}$ circuits, described in \Cref{quantumcirc}, can solve each instance of the parallel repetition game with a one-sided error of $\frac{1-\cos(\frac{2\pi}{p})}{q}$. Similarly, $\fQAC_{\mathsf{cf}}^0$ circuits achieve the same success probability, as detailed in \Cref{quantumACcirc}. By combining these individual solutions, we deduce that the resulting global strategies effectively solve the $k$-$\mathcal{R}_{q,2}^{q\cdot n}$ problems, as outlined in \Cref{paraAC}, with a probability of at least $1-o(1)$. This follows by considering their respective individual success probabilities and applying the Chernoff bounds.
\end{proof}

\begin{remark}
We implicitly used the fact that the $\MOD_2$ operation is not contained in any other $\AC^0[q]$ class with $q\neq 2$, which simplifies the proof since it allowed us to uniquely use Vazirani's XOR Lemma. However, we notice that the proof could follow with any other prime values and modular functions.    
\end{remark}

One drawback of the previous results is that we need to pick a different qudit dimension for each modular relation problem if we intend to use quantum circuits with finite gate sets. However, if we consider infinite-size gate sets, as in the standart definition of $\QNC^0$ and $\QAC^0$, we can show that all the previous quantum circuits can be implemented with qubits and prove the separation for $\AC^0[p]$ for all primes $p$.

\begin{theorem}\label{ssep}
For any prime $p$, $\iQAC_{\mathsf{cf}}^0$ over qubits is not contained in  $\AC^0[p]$, i.e., $\iQAC_{\mathsf{cf}}^0 \nsubseteq \AC^0[p]$.
\end{theorem}
\begin{proof}
From \cite{reck94} and \cite{Barenco95}, we know that all gates in $\fQNC^0$ circuits over qudits can be equivalently implemented using single- and two-qubit operations in constant depth. Furthermore, all quantum operations in $\fQAC^0$ circuits used to create the qudit-GHZ state in \Cref{poorconst,reduc} are controlled by a constant number of qudits, and therefore, these operations can be implemented in $\iQAC_{\mathsf{cf}}^0$ using qubit circuits alone.

Finally, we need to show that the corrections required to transform the poor man's cat states into qudit-GHZ states can be executed with $\iQAC_{\mathsf{cf}}^0$ circuits over qubits. As described in the proof of Lemma \ref{reduc}, these corrections can be carried out using a classical $\AC^0$ circuit, which is contained within $\iQAC_{\mathsf{cf}}^0$ over qubits. Consequently, all quantum circuits from \Cref{quantumACcirc,paralsep}, for any prime $p$, can be constructed using $\iQAC_{\mathsf{cf}}^0$ circuits over qubits.
\end{proof}

\begin{remark}
We note that the previous separation is achieved using what may be the weakest resource to date. This is justified because earlier separations have demonstrated the equivalence of quantum fanout and parity gates under a $\QNC^0$ reduction \cite{Green02}. However, lightcone arguments show that classical fanout cannot be translated into the classical parity gate under any $\QNC^0$ reduction.

Finally, we conjecture that $\QAC^0$cannot compute parity, and we propose that $\QAC_{\mathsf{cf}}^0$ similarly fails. This conjecture is supported by the observation that classical fanout does not seem to facilitate the unitary generation of a GHZ state, suggesting that these operations cannot be translated into one another as described in \cite{rosenthal2020}. This indicates that classical fanout and the classical parity gate may not be reducible to one another in $\QAC^0$, indicating that both are weaker computational resources than their quantum counterparts.
\end{remark}

\subsection{Classical upper bounds for modular relation problems}\label{sec:f-limitations}

In this subsection, we consider the classical upper bound on the modular relation problems, which indicates the limitations on the quantum/classical separations  using our techniques.

\begin{lemma}\label{relationUP}
For any fixed primes $q$,$p$ and fixed $k_1,k_2,m \in \mathbb{N}$,  $\mathcal{R}_{q^{k_1},p^{k_2}}^{m}$  can be solved by an $\NC^0[p]$ circuit.
\end{lemma}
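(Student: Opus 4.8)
The goal is to build an $\NC^0[p]$ circuit (bounded fan-in gates plus one unbounded $\MOD_p$ gate) that, on input $x \in \{0,1\}^n$, outputs a string $y \in \{0,1\}^m$ with $|y| \equiv 0 \pmod{q^{k_1}}$ iff $|x| \equiv 0 \pmod{p^{k_2}}$. The key structural observation is that the relation problem does not require us to \emph{decide} anything delicate about $|x|$; it only requires us to produce \emph{some} valid $y$, and the set of valid $y$ of length $m$ is large (all strings whose weight is $0 \bmod q^{k_1}$, respectively all strings whose weight is nonzero $\bmod q^{k_1}$). So the real task is: compute the single bit $b := [\,|x| \equiv 0 \pmod{p^{k_2}}\,]$ using $\NC^0[p]$ resources, and then have a bounded-fan-in post-processing layer write down a canonical length-$m$ witness of weight $0$ (e.g. $0^m$) when $b=1$ and a canonical witness of weight $1$ (e.g. $10^{m-1}$) when $b=0$. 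The second half is trivially $\NC^0$ since $m$ is a fixed constant; the whole difficulty is concentrated in computing $b$.

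To compute $b$, I would first apply the single $\MOD_p$ gate to $x$ itself, obtaining $r_0 := |x| \bmod p$ (more precisely, the $\MOD_p$ gate gives us whether $p \mid |x|$; to get the actual residue we need a little more, but note $p$ is a fixed constant so we can afford to ask for, say, the residue class by a constant number of $\MOD_p$-type queries on affine-shifted copies of the input — or simply observe that a single $\MOD_p$ gate applied to $x$ concatenated with a constant pad lets us read off $|x|\bmod p$ exactly, all within $\NC^0[p]$ since only one unbounded gate is used on disjoint-ish wiring; care is needed here that we truly use only \emph{one} unbounded gate). If $r_0 \neq 0$, then $p^{k_2} \nmid |x|$ and we are done with $b = 0$. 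The subtle case is $r_0 = 0$: we must then decide whether the higher powers $p^2, \dots, p^{k_2}$ divide $|x|$, which by Kummer's/Lucas-type reasoning amounts to examining the base-$p$ digits of $|x|$ — but we cannot compute $|x|$ exactly in $\NC^0[p]$. The escape is that $k_2$ is a \emph{fixed constant}: I claim that whether $p^{k_2} \mid |x|$ is determined by $|x| \bmod p^{k_2}$, and $|x| \bmod p^{k_2}$ can be extracted from a constant number of $\MOD_p$ queries applied to cleverly chosen sub-multisets / weighted copies of the input bits. Concretely, grouping the $n$ input bits into blocks and using the identity that lets a $\MOD_p$ gate simulate a $\MOD_{p^k}$-flavored readout via a constant-depth bounded-fan-in pre-encoding (replacing each bit by a fixed-size gadget of bits whose total contributes the right weight mod $p^{k_2}$) — this is the place where $k_1, k_2$ being constants is essential, and it is the main obstacle.

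The cleanest route around that obstacle, which I would try first, is a reduction at the level of the \emph{output encoding} rather than the input. Instead of insisting $m$ is arbitrary, recall the lemma allows us to choose how we present $y$: map $|x| \bmod p^{k_2}$ to a residue $\bmod q^{k_1}$ via any fixed function sending $0 \mapsto 0$ and nonzero $\mapsto$ nonzero, and realize \emph{that} residue as the weight of a constant-length block of $y$. Since $q^{k_1}$ and $p^{k_2}$ and $m$ are all fixed constants, the only genuinely unbounded computation left is "is $p^{k_2} \mid |x|$?", and this I would handle by the standard fact that for fixed prime power $p^{k_2}$, divisibility of a Hamming weight by $p^{k_2}$ reduces — via a constant-depth, bounded-fan-in "carry-save"/digit-extraction network that turns $n$ bits into $O(1)$ bits carrying $|x| \bmod p^{k_2}$, feeding a \emph{single} unbounded $\MOD_p$-based comparator — to something in $\NC^0[p]$; see the analogous gadget constructions in \cite{Smolensky87}-style arguments. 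I would then close the proof by assembling: (i) the bounded-fan-in weight-compression network, (ii) the one unbounded $\MOD_p$ gate reading off the relevant residue, (iii) a bounded-fan-in decoder producing the constant-length witness block of $y$, and (iv) padding the remaining $m - O(1)$ coordinates of $y$ with zeros. The one point demanding real care, and where I expect to spend the most effort, is verifying that the digit-extraction of $|x| \bmod p^{k_2}$ genuinely costs only \emph{one} unbounded gate and constant depth otherwise — i.e. that we have not smuggled in a second $\MOD_p$ gate or super-constant depth — and this is exactly where the hypothesis that $q, p, k_1, k_2, m$ are all fixed is doing the heavy lifting.
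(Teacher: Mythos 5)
Your reduction of the relation problem to computing the single bit $b=[\,|x|\equiv 0 \pmod{p^{k_2}}\,]$ and then emitting a constant witness string such as $0^m$ or $10^{m-1}$ is exactly the final step of the paper's proof (your choice of witness even gets the orientation of the ``iff'' right, which the paper's own write-up fumbles). The gap is in how you propose to compute $b$. You invoke a ``constant-depth, bounded-fan-in carry-save/digit-extraction network that turns $n$ bits into $O(1)$ bits carrying $|x|\bmod p^{k_2}$,'' feeding one unbounded gate. No such network exists: in a constant-depth circuit of bounded fan-in every output bit depends on only $O(1)$ input bits, whereas every bit of $|x|\bmod p^{k_2}$ depends on all $n$ inputs. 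Carry-save tricks reduce the number of summands but require logarithmic depth (or unbounded fan-in) to collapse the result to $O(1)$ bits, so the weight-compression step --- which you yourself identify as the heart of the argument --- cannot be realized with the resources you allot to it. Relatedly, any single application of $\MOD_p$, even to padded or $\NC^0$-preprocessed inputs, only reveals mod-$p$ information about $|x|$; it cannot separate $|x|\equiv 0\pmod{p^2}$ from $|x|\equiv p\pmod{p^2}$, so divisibility by higher powers of $p$ is genuinely beyond the one-unbounded-gate architecture you are trying to preserve.

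The missing ingredient is the inductive construction the paper imports from Goldreich's lecture notes: to obtain $\MOD_{p^{j}}$ from $\MOD_{p^{j-1}}$, compute the $n$ prefix values $y_i=\MOD_{p^{j-1}}(x_1,\dots,x_i)$, detect the $\lfloor |x|/p^{j-1}\rfloor$ ``crossings'' with fan-in-two gates $z_i=\mathsf{AND}(y_{i-1},\neg y_i)$, and test the crossing count modulo $p$ with a further modular gate. This uses polynomially many unbounded $\MOD$ gates per level but only constantly many levels since $k_2$ is fixed, so it stays within $\NC^0[p]$ under the reading the paper actually works with (a single unbounded fan-in gate \emph{type}, not a single gate instance). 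Your insistence on literally one unbounded gate is a stricter reading under which neither your sketch nor the paper's construction would go through; once it is relaxed, the prefix-$\MOD$ induction is the tool you need, and the bounded-fan-in compression network should be discarded.
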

\begin{proof}
We start by showing that there exists an $\NC^0[p]$ circuit that computes $\MOD_{p^{k_1}}$ for any $k_1 \in \mathbb{N}$. We use the $\AC^0[p]$ circuits described in \cite{goldreich23teaching} and argue that they can be implemented in $\NC^0[p]$. 

We prove this statement by induction on the exponent $k_1$, showing that any $\MOD_{p^{j}}$ gate can be computed if we have access to $\MOD_{p^{j-1}}$ gates. The circuit construction from \cite{goldreich23teaching} is as follows:
\begin{enumerate}
    \item For each $i \in [n]$, compute $y_i = \MOD_{p_{j-1}}(x)$.
    \item For all $i \in {2,\hdots,n}$, compute $z_i = \mathsf{AND}(y_{i-1},\neg y_i)$.
    \item Compute $b_{j} = \mathsf{AND}(\MOD_{p^{j-1}}(\sum_{i \in {2,\hdots,n}} z_i), \MOD_p(|x|))= \MOD_{p^{j}}(x)$.
\end{enumerate}
\noindent Given that the final value $b_{j} = \MOD_{p^{j}}(x)$~\cite{goldreich23teaching}, in order to show that $\MOD_{p^{j}}$ is in $\NC^0[p]$, we only need to show that each of these three steps can be implemented in $\NC^0[p]$. Let's evaluate these steps:

\begin{itemize}
    \item The first step is inherently within $\NC^0[p]$ due to inductive reasoning; given the prior case produced an $\MOD_{p^{j-1}}$ gate, only the next steps need validation.
    \item The second step can be achieved with an $\NC^0$ circuit since it solely relies on bounded fan-in $\mathsf{AND}$ gates.
    \item The third step follows since both modular gates are within $\NC^0[p]$ from our induction, combined with the $\mathsf{AND}$ gate having bounded fan-in.
\end{itemize}

\noindent Thus, any $\MOD_{p^{k_1}}$ gate can be implemented within $\NC^0[p]$ provided that $k_1$ is a constant, which determines the depth of the previously proposed circuit.

\medskip

Finally, to prove our statement, i.e., to output a valid string for the modular relation problem $\mathcal{R}_{q^{k_1},p^{k_2}}^{m}$, one needs to use the $\NC^0[p]$ circuit to compute the value of $\MOD_{p^{k_1}}(x)$ and then append it to a string like $0^{ m-1}$. We have that $|b_{k_1}0^{m-1}| \pmod{q^{k_2}} =  0$ iff
$|x| \pmod{p^{k_1}} =  0$. 
\end{proof}

Given the containment of the $\NC^0[p]$ circuit classes within the $\AC^0[p]$ classes, we have that all modular relation problems are solvable by $\TC^0$ circuits. 

\begin{corollary}
For any primes $q$,$p$ and $k_1,k_2,m \in \mathbb{N}$,  $\mathcal{R}_{q^{k_1},p^{k_2}}^{m}$  can be solved in $\TC^0$.
\end{corollary}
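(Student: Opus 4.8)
The final statement is the corollary asserting that $\mathcal{R}_{q^{k_1},p^{k_2}}^{m}$ is solvable in $\TC^0$ for all primes $q,p$ and constants $k_1,k_2,m$. The plan is to invoke the preceding \Cref{relationUP}, which already places every such modular relation problem inside $\NC^0[p]$, together with the standard containment $\NC^0[p] \subseteq \AC^0[p] \subseteq \TC^0$. The first inclusion is immediate from the definitions (an $\NC^0[p]$ circuit is an $\AC^0[p]$ circuit with all fan-ins bounded except for a single $\MOD_p$ gate), and the second is the classical fact that $\AC^0[p]$ is contained in $\TC^0$, since a $\MOD_p$ gate on $n$ inputs can be simulated by computing the integer sum of the input bits with threshold gates and then reducing modulo $p$ — or, more simply, a $\MOD_p$ gate is computable by a constant-depth threshold circuit because majority gates can implement exact-counting and hence any symmetric Boolean function.

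Concretely, I would proceed as follows. First, cite \Cref{relationUP} to obtain an $\NC^0[p]$ circuit $C$ that, on input $x$, outputs a valid string for $\mathcal{R}_{q^{k_1},p^{k_2}}^{m}$. Second, observe that $C$ consists of bounded-fan-in Boolean gates plus at most $O(1)$ layers involving $\MOD_p$ gates (in fact a single $\MOD_p$ gate per the construction, but this does not matter). Third, replace each $\MOD_p$ gate by its constant-depth, polynomial-size threshold-circuit implementation: since $p$ is a fixed constant, $\MOD_p$ is a fixed symmetric function of its inputs, and every symmetric Boolean function on $n$ variables lies in $\TC^0$ (compute the Hamming weight via threshold gates, then select the appropriate output by a DNF over the $O(\log n)$-bit weight value). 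Substituting these threshold gadgets into $C$ and collapsing the constant number of layers yields a polynomial-size, constant-depth threshold circuit computing the same relation, i.e. a $\TC^0$ circuit. This establishes the claim.

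I do not expect any genuine obstacle here: the corollary is essentially a one-line consequence of \Cref{relationUP} and the well-known inclusion $\AC^0[p] \subseteq \TC^0$, which the text itself flags (``Given the containment of the $\NC^0[p]$ circuit classes within the $\AC^0[p]$ classes''). The only point that warrants a sentence of care is making explicit \emph{why} $\AC^0[p] \subseteq \TC^0$ — namely that bounded-depth threshold circuits compute all symmetric functions and in particular $\MOD_p$ for constant $p$ — so that the reader is not left to supply it. If one wanted to be fully self-contained one could alternatively note that $\mathcal{R}_{q^{k_1},p^{k_2}}^{m}$ reduces to computing $\MOD_{p^{k_1}}(x)$ and appending zeros, and that $\MOD_{p^{k_1}}$ for constant $p,k_1$ is again a symmetric function, hence directly in $\TC^0$, bypassing the $\NC^0[p]$ detour altogether; either route is short.
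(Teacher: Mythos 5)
Your proposal is correct and follows essentially the same route as the paper, which derives the corollary directly from \Cref{relationUP} together with the containment chain $\NC^0[p] \subseteq \AC^0[p] \subseteq \TC^0$ stated in its preliminaries. The only difference is that you spell out why $\AC^0[p] \subseteq \TC^0$ (threshold circuits compute all symmetric functions, hence $\MOD_p$), which the paper leaves as a cited classical fact.
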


\section{Shallow depth quantum circuits with infinite-size gate sets}\label{quditcolapse}

This section considers constant-depth quantum circuits utilizing qubits and qudits, employing infinite gate sets and modular gates with unbounded fan-in while exploring their equivalences.

\subsection{Constant-depth qudit circuits with quantum modular operators}

In this subsection, we extend \cite{tani16} and \cite{mori2018}, and prove that the hierarchy of constant-depth quantum circuits with an infinite gate set over qudits collapses when we have access to a quantum unbounded fan-in modular gates.

We start by focusing on the quantum $\mathsf{OR}$ function, defined over Hilbert spaces of prime dimension $p$ as follows,
\begin{equation}
\mathsf{qOR} \ket{x_1,x_2, \hdots, x_n} := \ket{x_1 + \mathsf{OR}(x_2,x_3,\hdots,x_n)\pmod{p},x_2,\hdots,x_n}.
\end{equation}
\noindent with the classical $\mathsf{OR}$ operating over strings in $\mathbb{F}_p$ defined as follows\footnote{The intuition behind the $\mathsf{AND}$ and $\mathsf{OR}$ functions over strings in $\mathbb{F}_p$ is that, similar to the binary case, the $\mathsf{AND}$ function discriminates against only one string, while the $\mathsf{OR}$ function includes all strings except one.},
\begin{equation}
\mathsf{OR}(x)= 
\begin{cases}
     1 & \text{if } \sum_{i=1}^n x_i > 0, \\
     0 & \text{otherwise.}
    \end{cases}.
\end{equation}

We first claim that a qudit analog of H{\o}yer and {\v S}palek's  $\mathsf{OR}$ reduction~\cite{Hoyer05}, which allows us to reduce the problem of computing $\mathsf{qOR}$ on an arbitrary state to the evaluation of $\mathsf{qOR}$ on another state with exponentially fewer qudits.

\begin{lemma}\label{OrReduction}
Let  $p$ be a prime. There exists a $\iQNC^0[p]$  circuit $C$ over qudits of dimension $p$, such that for every $n$ qudit state $\ket{\psi}=\sum_{x\in \mathbb{F}_p^n} \alpha_x \ket{x}$, we have

\begin{equation}
    C\ket{\psi}\ket{0}^{\otimes t} = 
    \sum_{x\in \mathbb{F}_p^n,\ y\in \mathbb{F}_p^{\log_p(n)}} \alpha_x   \beta_{y,x} \ket{y}  \ket{x}\ket{0}^{\otimes t'}= \ket{\psi^*}\ket{0}^{\otimes t'},
\end{equation}
where and $\mathsf{qOR}\ket{0}\ket{\psi} = \mathsf{qOR}\ket{0}\ket{\psi^*}$ with respect to the first qudit, while the second $\mathsf{qOR}$ does operate uniquely over the first $\log(n)$ qudits of $\ket{\psi^*}$. 
\end{lemma}
\begin{proof}
We describe now our circuit $C$. In the first step of this proposed circuit, we have a layer of $n$ fanout gates, which are defined as follows
\begin{equation}
\mathsf{fanout}_p \ket{x_1, x_2, \hdots, x_n} := \ket{x_1,(x_1+x_2)\mod p,\hdots,(x_1+x_n)\mod p}.
\end{equation}
The $k$-th $\mathsf{fanout}_p$ gate is applied to the $k$-th qudit of $\ket\psi$ and a fresh $\log_p(n)$-qudit state initialized on $\ket{0}^{\otimes \log(n)}$\footnote{Note that this fanout gate can be equally created from the $\mathsf{qMOD}_p$ gate, $\mathsf{fanout}_p=(\mathsf{F}_p)^{\otimes n}\mathsf{qMOD}_p(\mathsf{F}_p)^{\otimes n}$.}.

In parallel to these $\mathsf{fanout}_p$ gates, we create $\log_p(n)$ copies of $n$-qudit GHZ states. These can be created by an $\fQNC_p^0[p]$ circuit using the method proposed in \cite{Peres23}.  We have the following state 
\begin{equation}
\ket{\psi^{(1)}}=C_{step\ 1} \ket{\psi}\ket{0}^{\otimes t}= \sum_{x\in \mathbb{F}_p^n} \alpha_x \ket{x}^{\otimes \log_p(n)}  
\ket{\mathsf{GHZ}_{p,n}^0}^{\otimes \log_p(n)} \ket{0}^{\otimes t'}.
\end{equation}

For each $1 \leq k \leq  \log_p(n)$, we apply $n$ controlled rotations, where the $j$-th rotation of the $k$-th block will have as control the $j$-th qudit of the $k$-th copy of $\ket{x}$ and its target is the $j$-th qudit of the $k$-th GHZ state.
The amplitude for all rotations in the $k$-th block is set to a fixed value of $\frac{2\pi}{p^k}$. Importantly, these operations can be implemented in constant depth due to the parallel copies of $\ket{x}$.

For a fixed $x$, this process modifies the $k$-th GHZ state to
\begin{align}
\left(\bigotimes_{i=1}^n \mathsf{R}_Z^p\bigg(\frac{2\pi x_i}{p^k}\bigg) \right)\ket{\mathsf{GHZ}_{p,n}^0}&= \sum_{x\in \mathbb{F}_p^n} \alpha_x \bigg(\frac{1}{\sqrt{p}} \sum_{j=0}^{p-1} e^{i\cdot\big(j\cdot\frac{2\pi}{p}\cdot\frac{|x|}{p^k}\big)} \ket{j}^{\otimes n}\bigg)\\ &= \sum_{x\in \mathbb{F}_p^n} \alpha_x \sum_{j=0}^{p-1} c_{j,k,x} \ket{\mathsf{GHZ_{p,n}^j}},
\end{align}
\noindent with $(c_{j,k,x})^2=e^{i\big (j\big(\frac{2\pi}{p}(1- \frac{|x|}{p^k}\big ) \big )}$.

Note that for $|x|=0$, all these GHZ states remain unchanged and are equal to $\ket{\mathsf{GHZ}_{p,n}^0}$. However, in contrast, we will now prove that for all basis states $\ket{x}$ with $|x| \geq 1$, one of the $\log_p(n)$ GHZ states we started with ends up in the form $\ket{\mathsf{GHZ}_{p,n}^m}$, where $m \neq 0$. In other words, for each nonzero $x$, there exists one value of $k$ for which $c_{j,k,x}$ equals $1$ for a single value of $j$ and $0$ for all others. Meanwhile, for the same value of $x$ and all other values of $k$, $c_{j,k,x}$ takes non-unitary values for all values of $j$. To establish this, we utilize the fact that any integer number $|x|$ can always be decomposed as
\begin{equation}
         |x|=p^{a_x}(p\cdot b_x+m_x),
\end{equation}
\noindent with $a_x,b_x\in \mathbb{N}$ and $m_x \in [1,2,\hdots,p-1]$. Hence, if $|x| > 0 $, then we have that 
\begin{equation}
\left(\bigotimes_{i=1}^n \mathsf{R}_Z^p\bigg(\frac{2\pi x_i}{p^{(a_x+1)}}\bigg)\right)\ket{\mathsf{GHZ}_{p,n}^0}=\ket{\mathsf{GHZ}_{p,n}^{m_x}},
\end{equation} \noindent with $m_x \neq 0$ by definition. Therefore, $c_{j=m_x,k=a_x,x}=1$ and  $c_{j\neq m_x,k=a_x,x}=0$, while for all the other GHZ states, these coefficients $c_{j,k\neq a_x,x}$ are non-unital as $\frac{2\pi}{p}(1- \frac{|x|}{p^k}\big )$ is not a multiple of $2\pi$. Thus, the state resulting from this set of controlled rotations is the following form,

\begin{align}
\ket{\psi^{(2)}}=&C_{step\ 2}\bigg (\sum_{x\in \mathbb{F}_p^n} \alpha_x \ket{x}^{\otimes \log_p(n)}
\ket{\mathsf{GHZ}_{p,n}^0}^{\otimes \log_p(n)}  \bigg )  \otimes \ket{0}^{\otimes t'}\\
=& \Bigg( \alpha_{0^n} \ket{0}^{\otimes n \log_p(n)} \ket{\mathsf{GHZ}_{p,n}^0}_0 \ket{\mathsf{GHZ}_{p,n}^0}_1 \hdots \ket{\mathsf{GHZ}_{p,n}^0}_{\log_p(n)} +\sum_{x\in \mathbb{F}_p^n\setminus 0^n} \alpha_x \ket{x}^{\otimes \log_p(n)}\label{eq:ghz1}\\& \hdots \left(\sum_{j=0}^{p-1} c_{j,a_x,x} \ket{\mathsf{GHZ}_{p,n}^j}\right)\ket{\mathsf{GHZ}_{p,n}^{m_x}}_{(a_x+1)}\left(\sum_{j=0}^{p-1} c_{j,(a_x+2),x} \ket{\mathsf{GHZ}_{p,n}^j}\right) \hdots \Bigg )  \otimes \ket{0}^{\otimes t'}. \label{eq:ghz2}
\end{align}

We then apply Fourier gates $\mathsf{F}_p^{\otimes n}$ to the qudits initially entangled in GHZ states. Employing Lemma \ref{projection}, we deduce that when $|x|=0$, the resultant states form a superposition of strings with a Hamming weight congruent to zero modulo $p$. Conversely, for $|x| \geq 0$, at least one state transitions into a superposition of strings congruent modulo $p$ to the inverse of $m_x$. It is important to note that since $m_x \mod p$ is invariably non-zero, its additive inverse $-m_x \mod p$ is likewise non-zero. Subsequently, we apply a $\mathsf{qMOD}_p$ gate with each of these states as control and as a target a new single qudit in a freshly initialized computational basis state $\ket{0}$ in our third register. The existence of the state described earlier ensures the $\mathsf{qOR}$'s intended effect, as at least one qudit in the third register will deterministically shift to a basis different from $\ket{0}$. For $|x| \geq 0$, since the GHZ states prior to the Fourier gates are superpositions in the Qudit-GHZ orthogonal X-basis, these strings after the Fourier gates are superpositions that align modulo $p$ with potentially any value in $\mathbb{F}_p$. Consequently, in the third register, all other qudits assume superpositions dependent on the input, yet they do not influence the $\mathsf{qOR}$ operation.

Afterward, we apply a layer of the inverses of the Fourier gates $\mathsf{F}_p^{\otimes n}$ to exactly the same qudits in the second register as before, and consequently, they revert to the same states as described in Equations \ref{eq:ghz1} and \ref{eq:ghz2}, being for all $k\in \{1,\hdots,\log(n)\}$ of the form 
\begin{equation}
 \sum_{x\in \mathbb{F}_p^n} \alpha_x \bigg(\frac{1}{\sqrt{p}} \sum_{j=0}^{p-1} e^{i\cdot\big(j\cdot\frac{2\pi}{p}\cdot\frac{|x|}{p^k}\big)} \ket{j}^{\otimes n}\bigg)= \sum_{x\in \mathbb{F}_p^n} \alpha_x \sum_{j=0}^{p-1} c_{j,k,x} \ket{\mathsf{GHZ}_{p,n}^j}.
\end{equation}

This is possible because previously, we simply applied a $\mathsf{qMOD}_p$ with these states as control and did not change the target qudits. Thus, these states return to the same bases and states after the use of the inverse Fourier gates. Therefore, after these operations, we obtain the subsequent state, for which we do not represent the qudits of the third register in its position but next to each GHZ state that generated each one of them for a simpler representation.

\begin{align}
C_{step\ 3} \ket{\psi^{(2)}}\ket{0}^{\otimes t'}  = \Bigg ( \alpha_{0^n} \ket{0}^{\otimes (n\cdot \log_p(n))}\ket{\mathsf{GHZ}_{p,n}^0}_0 \ket{\mathsf{GHZ}_{p,n}^0}_1 \hdots \ket{\mathsf{GHZ}_{p,n}^0}_{\log_p(n)}\ket{0}^{\otimes  \log_p(n)} \\  
+ \sum_{x\in \mathbb{F}_p^n\setminus 0^n} \alpha_x \ket{x}^{\otimes \log_p(n)}  \hdots \left(\sum_{j=0}^{p-1} c_{j,a_x,x} \ket{\mathsf{GHZ}_{p,n}^j}\ket{-j}\right) \ket{\mathsf{GHZ}_{p,n}^{m_x}}_{(a_x+1)} \ket{-m_x}\\
\left(\sum_{j=0}^{p-1} c_{j,(a_x+2),x} \ket{\mathsf{GHZ}_{p,n}^j}\ket{-j}\right) \hdots \Bigg) \otimes \ket{0}^{\otimes t''} .
\end{align}

At this stage, all the qudits in the three registers are entangled. However, we can disentangle the first and third registers from the second, which contains the GHZ states. This disentanglement can be achieved using the third register that retains the phase information of the equivalent of the GHZ states, creating the entanglement between these two registers. For that, we use generalized controlled rotations, with the control based on the qudits of the third register, applied to one qudit of the respective GHZ state of amplitude $\frac{2\pi}{p^k}$. For simplicity, we will consider first the effect on a single basis of $\ket{x}$,

\begin{align}
\alpha_x \ket{x}^{\otimes \log_p(n)}  \hdots  \left(\sum_{j=0}^{p-1} c_{j,a_x,x} \left(\mathsf{CGR}\left(\frac{2\pi}{p^k}\right) \ket{\mathsf{GHZ}_{p,n}^j}\ket{-j}\right)\right)\hdots  \\
=\alpha_x \ket{x}^{\otimes \log_p(n)}  \hdots  \left( \frac{1}{\sqrt{p}} \sum_{j=0}^{p-1} e^{i\cdot\big(j\cdot\frac{2\pi}{p}\cdot\frac{|x|}{p^k}\big)} \left(\mathsf{CGR}\left(\frac{2\pi}{p^k}\right) \ket{j}^{\otimes n}\ket{-j}  \right)\right)\hdots \\
=\alpha_x \ket{x}^{\otimes \log_p(n)}  \hdots  \left( \frac{1}{\sqrt{p}} \sum_{j=0}^{p-1} e^{i\cdot\big(j\cdot\frac{2\pi}{p}\cdot\frac{|x|}{p^k}\big)}e^{i\cdot\big(-j\cdot\frac{2\pi}{p}\cdot\frac{|x|}{p^k}\big)}  \ket{j}^{\otimes n}\ket{-j}  \right)\hdots \\
=\alpha_x \ket{x}^{\otimes \log_p(n)}  \hdots  \left(\sum_{j=0}^{p-1} \ket{\mathsf{GHZ}_{p,n}^0}\ket{-j}  \right)\hdots \ .
\end{align}

\noindent Now as the previous transformation is independent of the respective basis $\ket{x}$, we obtain the following state, 
\begin{align}
C_{step\ 4} \ket{\psi^{(3)}}\ket{0}^{\otimes t''}  = \Bigg ( \alpha_{0^n} \ket{0}^{\otimes (n\cdot \log_p(n))}\ket{\mathsf{GHZ}_{p,n}^0}_0 \ket{\mathsf{GHZ}_{p,n}^0}_1 \hdots \ket{\mathsf{GHZ}_{p,n}^0}_{\log_p(n)}\ket{0}^{\otimes  \log_p(n)} \\  
+ \sum_{x\in \mathbb{F}_p^n\setminus 0^n} \alpha_x \ket{x}^{\otimes \log_p(n)}  \hdots \ket{\mathsf{GHZ}_{p,n}^0}\left(\sum_{j=0}^{p-1} c_{j,a_x,x} \ket{-j}\right) \ket{\mathsf{GHZ}_{p,n}^{0}}_{(a_x+1)} \ket{-m_x}\\
\ket{\mathsf{GHZ}_{p,n}^0}\left(\sum_{j=0}^{p-1} c_{j,(a_x+2),x} \ket{-j}\right) \hdots \Bigg) \otimes \ket{0}^{\otimes t''} .
\end{align}

Consequently, we end up with $\log_p(n)$ GHZ states of the form $\ket{\mathsf{GHZ}_{p,n}^0}$, which are no longer entangled with the other registers, as intended. These states and the repetitions of the initial state can be easily removed using the fanout gates and their inverses\footnote{The inverse of the $\mathsf{fanout}_p$ gate is simply  $(\mathsf{fanout}_p)^{p-1}$.}  obtaining the following state, 

\begin{align}
\ket{\psi^{(5)}}=C_{step\ 5} \ket{\psi^{(4)}}=\Bigg ( \alpha_{0^n} \ket{0}^{\otimes (n+ \log_p(n))}+ \sum_{x\in \mathbb{F}_p^n\setminus 0^n} \alpha_x \ket{x}\hdots \left(\sum_{j=0}^{p-1} c_{j,a_x,x} \ket{-j}\right)\ket{-m_x}   \label{eq:ortho1}\\  
 \left(\sum_{j=0}^{p-1} c_{j,(a_x+2),x} \ket{-j}\right) \hdots  \Bigg) \otimes \ket{0}^{\otimes t'''} 
= \sum_{x\in \mathbb{F}_p^n,\ y\in \mathbb{F}_p^{\log_p(n)}}\alpha_x     \beta_{y,x}\ket{x}\ket{y}\otimes \ket{0}^{\otimes t'''}\label{eq:ortho2}.
\end{align}

Finally, if we take $\ket{\psi^*}$ as $\ket{\psi^{5}}$ with the first and second register swapped, we obtain that the former state satisfies the property $\mathsf{qOR}\ket{0}\ket{\psi} = \mathsf{qOR}\ket{0}\ket{\psi^*}$ on the first qudit, with the second $\mathsf{qOR}$ applied only to the first $\log(n)$ qudits. This is justified by that fact that the basis $\ket{0}^{\otimes \log_p(n)}$  in $\ket{\psi^*}$ has the same amplitude as the basis $\ket{0}^{\otimes n}$ in $\ket{\psi}$. The same principle forcefully applies to set all the bases orthogonal to the previously described, as demonstrated in Equations \ref{eq:ortho1} and \ref{eq:ortho2}. Therefore, the effect of the $\mathsf{qOR}$ function on the first qudit state is equal for both cases considered completing the proof.
\end{proof}

The second step shows an exponential-size $\iQNC^0$ circuit over qudits of dimension $p$ for $\mathsf{qOR}$.  For that, we use the following theorem on the representation of functions as a multi-linear polynomial over $\mathbb{F}_p$, expanding previous uses of these objects over $\mathbb{F}_2$ and $\mathbb{F}_3$ \cite{mori2018,Oliveira22,Mackeprang_2023}.

\begin{theorem}\label{Fourier}\cite{luong2009fourier}
Every function $f:\mathbb{F}_p^n \rightarrow \mathbb{R}$ can be expressed as a polynomial
\begin{equation}\label{decomp}
    f(x)= \sum_{k_1,k_2,\hdots,k_n \in \mathbb{F}_p} \widehat{f}(\chi_{k_1,k_2,\hdots,k_n})\cdot\chi_{k_1,k_2,\hdots,k_n}(x) ,
\end{equation}
\noindent with $\widehat{f}(\chi_{k_1,k_2,\hdots,k_n})$ being a real coefficient, 
and $\chi_{k_1,k_2,\hdots,k_n}(x)$  multi-linear function in $\mathbb{F}_p$. 
\end{theorem}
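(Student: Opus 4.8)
The plan is a dimension count in the $p^n$-dimensional real vector space $V$ of all functions $f:\mathbb{F}_p^n\to\mathbb{R}$. First I would make the index set concrete by identifying $\mathbb{F}_p$ with $\{0,1,\dots,p-1\}\subseteq\mathbb{R}$, so that each coordinate $x_i$ is an honest real number, and take as the family $\{\chi_{k_1,\dots,k_n}\}$ the product functions $\chi_{k_1,\dots,k_n}(x):=\prod_{i=1}^n x_i^{k_i}$ indexed by $(k_1,\dots,k_n)\in\{0,\dots,p-1\}^n$ (each $\chi$ has degree at most $p-1$ in every coordinate separately, which is the sense in which it is ``multi-linear'' here). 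There are exactly $p^n$ of these, matching $\dim V$, so it suffices to prove they are linearly independent; then they form a basis, every $f\in V$ has a unique expansion $f=\sum_{k}\widehat f(\chi_k)\,\chi_k$, and the coefficients $\widehat f(\chi_k)$ are automatically real because $V$ and the basis both live over $\mathbb{R}$. This is exactly \eqref{decomp}.

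For linear independence I would induct on $n$. The base case $n=1$ is the statement that a nonzero real polynomial of degree at most $p-1$ cannot vanish at the $p$ distinct points $0,1,\dots,p-1$ — equivalently, the $p\times p$ Vandermonde matrix on these nodes is invertible. For the inductive step, suppose $\sum_{k_1,\dots,k_n}c_{k_1,\dots,k_n}\prod_i x_i^{k_i}$ vanishes on all of $\mathbb{F}_p^n$; fixing $(x_2,\dots,x_n)$ and regarding the expression as a univariate polynomial in $x_1$ of degree $\le p-1$ that vanishes on $\mathbb{F}_p$, the base case forces each coefficient $g_{k_1}(x_2,\dots,x_n):=\sum_{k_2,\dots,k_n}c_{k_1,k_2,\dots,k_n}\prod_{i\ge2}x_i^{k_i}$ to vanish identically on $\mathbb{F}_p^{n-1}$ for every fixed $k_1$, and the inductive hypothesis applied to each $g_{k_1}$ gives $c_{k_1,\dots,k_n}=0$ for all indices. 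Alternatively, one can shortcut the whole argument by invoking multivariate Lagrange interpolation: the products of per-coordinate Lagrange basis polynomials $\delta_a(x)=\prod_{i=1}^n\prod_{b\in\mathbb{F}_p,\,b\ne a_i}\frac{x_i-b}{a_i-b}$, for $a\in\mathbb{F}_p^n$, satisfy $\delta_a(a')=1$ if $a=a'$ and $0$ otherwise, hence already form a basis of $V$; writing $f=\sum_a f(a)\,\delta_a$, expanding each $\delta_a$ into monomials in the $x_i$, and collecting terms yields \eqref{decomp} directly, with manifestly real coefficients.

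There is no deep obstacle here: the entire content is linear algebra plus one elementary fact about univariate polynomials (a degree $\le p-1$ polynomial with $p$ distinct roots is identically zero, i.e. Vandermonde nonsingularity), lifted to $n$ variables by a routine induction or by multivariate interpolation. The only point requiring a little care is fixing the meaning/normalization of the $\chi$'s so that the coefficients come out real: if one instead took the additive characters $\chi_k(x)=\omega^{k\cdot x}$ with $\omega=e^{2\pi i/p}$ — whose orthonormality follows from the geometric-series identity $\sum_{j=0}^{p-1}\omega^{jm}=0$ for $p\nmid m$, exactly as in the proof of \Cref{projection} — then real-valuedness of $\widehat f$ fails in general, so the monomial/Lagrange basis above is the appropriate choice for the statement as phrased.
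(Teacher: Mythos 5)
The paper does not actually prove this statement: it is imported wholesale from the cited reference, so there is no internal proof to compare yours against. On its own terms your argument is correct and complete for the statement as you have formalized it: identifying $\mathbb{F}_p$ with $\{0,\dots,p-1\}\subseteq\mathbb{R}$, taking $\chi_{k_1,\dots,k_n}(x)=\prod_i x_i^{k_i}$, counting $p^n$ functions against the $p^n$-dimensional space, and getting linear independence from the univariate Vandermonde fact lifted by induction (or, more directly, from multivariate Lagrange interpolation) is exactly the standard route, and real-valuedness of the coefficients is automatic. The one substantive caveat is interpretive rather than logical: the theorem as stated leaves the family $\{\chi_k\}$ underspecified, and the way it is consumed in \Cref{qOR} treats $\chi_{k_1,\dots,k_{n-1}}(x)$ as a value written into a single qudit by one $\mathsf{qMOD}_p$ gate, i.e.\ as the linear form $\sum_i k_i x_i \bmod p$. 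Your monomial basis proves a true theorem, but not that one; the bare linear-form family does not even span (for $p=2$, $n=1$ it consists of the functions $0$ and $x$, which miss the constants), and the additive characters $\omega^{k\cdot x}$ span but with complex coefficients, as your closing remark correctly observes. So your proof is sound, and the tension you identified between ``real coefficients'' and ``characters'' is a genuine looseness in the paper's statement and its downstream use, not a gap in your argument.
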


We now extend the techniques of \cite{tani16} to show that our exponential-size circuit for $\mathsf{qOR}$.

\begin{lemma}\label{qOR}
For any prime $p$, $\mathsf{qOR}$ can be implemented on an n-qudit state using $\mathcal{O}(n\cdot p^n)$ operations with a circuit over qudits of dimension $p$ in $\iQNC^0[p]$.
\end{lemma}
\begin{proof}
Let us describe the $\iQNC^0$ circuit over qudits of dimension $p$ that computes $\mathsf{qOR}$. The first layer of the circuit consists of $n-1$ fanout gates over the respective qudit dimension, where the $i$-th gate is controlled by the $(i+1)$-th qudit of the initial state to a new register initialized to $\ket{0}^{\otimes p^{n-1}}$. This maps the initial state $    \ket{0}\ket{\psi}\ket{0}^{\otimes t}$ to
\begin{align}
\ket{0}\sum_{x\in \mathbb{F}_p^{n-1}} \alpha_x \ket{x}(\ket{x_2,\hdots, x_n})^{\otimes p^{n-1}}\ket{0}^{\otimes t'}. 
\end{align}
We then apply $p^{n-1}$ parallel $\mathsf{qMOD}_p$ gates, where the $i$-th gate is applied on the $i$-th block of the previous state, and the $\mathsf{qMOD}_p$ computes a different $\chi_{k_1,k_2,\hdots,k_n}$  leading to
\begin{align}
\ket{0} \sum_{x\in \mathbb{F}_p^{n}} \alpha_x \ket{x}
\left(\bigotimes_{k_1,...,k_{n-1} \in \mathbb{F}_p} \ket{x_2,\hdots, x_n}\ket{\chi_{k_1,k_2,\hdots,k_{n-1} }(x_2,...,x_n)}\right) \ket{0}^{\otimes t'}. 
\end{align}
We then create a $\ket{\mathsf{GHZ}_{p,p^{n-1}}^0}$ state and we apply generalized controlled rotations as follows. The $i$-th rotation has as control the register where $\chi_{k_1,k_2,\hdots,k_n}(x_2,...,x_n)$ is computed and the target is the $i$-th qudit of the GHZ state.
The angle of the rotation is $\widehat{f}_{\overline{ \mathsf{OR}}}(\chi_{k_1,k_2,\hdots,k_n})$. These angles are defined by the $\overline{\mathsf{OR}}$ function which is defined as follows,
\begin{equation}
\overline{ \mathsf{OR}}(x)= 
\begin{cases}
     p-1 & \text{if } \sum_{i=1}^n x_i > 0, \\
     0 & \text{otherwise.}
    \end{cases}
\end{equation}
\noindent and that we decompose, according to \Cref{Fourier} as
\begin{equation}
    \overline{ \mathsf{OR} }(x)=  \sum_{k_1,k_2,\hdots,k_n \in \mathbb{F}_p} \widehat{f}_{\overline{ \mathsf{OR}}}(\chi_{k_1,k_2,\hdots,k_n})\cdot \chi_{k_1,k_2,\hdots,k_n}(x)  
\end{equation}
\noindent to obtain the following state on this last register, 
\begin{align*}
 \ket{\overline{ \mathsf{OR}}}&= \sum_{x\in \mathbb{F}_p^{n}} \alpha_x \bigg(\frac{1}{\sqrt{p}} \sum_{j=0}^{p-1} e^{j\cdot\frac{\pi}{p}\big (\sum_{k_1,k_2,\hdots,k_{n-1} \in \mathbb{F}_p} \widehat{f}_{\overline{ \mathsf{OR}}}(\chi_{k_1,k_2,\hdots,k_{n-1}})\cdot\chi_{k_1,k_2,\hdots,k_{n-1}}(x_2\hdots x_n) \big)} \ket{j}^{\otimes p^{n-1}}\bigg ) \\
 &= \sum_{x\in \mathbb{F}_p^{n}} \alpha_x \bigg(\frac{1}{\sqrt{p}} \sum_{j=0}^{p-1} e^{\frac{j\cdot\pi\cdot \overline{ \mathsf{OR}}(x_2\hdots x_n)}{p}} \ket{j}^{\otimes p^{n-1}} \bigg) = \sum_{x\in \mathbb{F}_p^{n}} \alpha_x \ket{\mathsf{GHZ}_{p,p^{n-1}}^{\overline{\mathsf{OR}}(x_2,\hdots,x_n)}}.
\end{align*}

In the last step, we apply the Fourier gate to all the qudits in the newly created state in the third register and a single use of the quantum modular gate $\mathsf{qMOD}_p$ from all the qudits to the first qudit in the state $\ket{0}$. Using~\Cref{fullbasis,projection}, we obtain the state

\begin{align*}
 &\sum_{x\in \mathbb{F}_p^n } \alpha_x \ket{\overline{\mathsf{OR}}(x_2\hdots x_n)^{-1}}\ket{ x_1,x_2,\hdots,x_n}\\  &\left(\bigotimes_{k_1,...,k_{n-1}} \ket{x_2,\hdots, x_n}\ket{\chi_{k_1,k_2,\hdots,k_{n-1} }(x_2,...,x_n)}\right)\ \Bigg( \frac{1}{\sqrt{p^{n-2}}} \sum_{\substack{z \in \mathbb{F}_p^{n-1} \\ |z| \mod p=-\overline{ \mathsf{OR}}(x_2,\hdots,x_n) }} \ket{z} \Bigg ).
\end{align*}

Next, we apply a $\mathsf{SUM}$ gate to the first qudit, with the first qudit from the initial state $\ket{\psi}$. This ensures we obtain the expected output state from applying $\mathsf{qOR}$ function on the input state $\ket{\psi}$ given that $-\overline{\mathsf{OR}}(x_2,\hdots,x_n) =\mathsf{OR}(x_2,\hdots,x_n)$. Finally, we execute the inverse operations on the auxiliary qudits to decouple these from the input state. We can do this because all operations acting on the input state and auxiliary qudits are unitary. In addition, the inverse operations of $\mathsf{fanout}_p$, $\mathsf{qMod}_p$, and of the controlled rotations—are all within $\fQNC^0[p]$ circuits over qudits of dimension $p$. See Section \ref{operations} for details on the first two operations.
\end{proof}

We now prove our first collapse of the different classes on constant-depth quantum circuits. 

\begin{theorem}\label{lemma:i-collapse-qac}
For any $p$ prime, $\iQNC^0[p]$ and $\iQAC^0[p]$ over qudits of dimension $p$ are equivalent, i.e. $\iQNC^0[p]=\iQAC^0[p]$.
\end{theorem}
\begin{proof}
To prove this collapse, we need only to show that $\mathsf{qOR}$ and the  $\mathsf{qAND}$  operation can be implemented by an $\iQNC^0[p]$ circuit over qudits of dimension $p$. The latter gate is already the respective Toffoli $\mathsf{X}$ gate, which we can use to obtain any other multi-qudit Toffoli gate. 

To implement $\mathsf{qOR}$, we integrate circuits $C_1$ and $C_2$ from Lemmas \ref{OrReduction} and \ref{qOR}, respectively. Circuit $C_1$ effectively reduces the number of qudits in the state $\ket{\psi}$, generating a new state $\ket{\psi^*}$ that preserves the $\mathsf{qOR}$ output alongside a fresh $\ket{0}$, and $C_2$ performs the actual $\mathsf{qOR}$ computation. By applying $C_1$ to all qudits of $\ket{\psi}$ except the first, and subsequently feeding $\ket{0}\ket{\psi^*}$ to $C_2$. The global output is generated by applying the $\mathsf{SUM}$ gate to the first qudit $\sum_{x \in [p^{n-1}]} \alpha_{x_2,\hdots,x_n} \ket{\mathsf{OR}(x_2,\hdots,x_n)}$ resulting from $C_2$ and the first qudit of the initial $\ket{\psi}$ state.

To conclude the proof, we return all auxiliary qudits to their initial state, $\ket{0}^{\otimes t}$. Initially, we observe that $\log(n)$ auxiliary qudits arise from the $\mathsf{OR}$ reduction. We then apply the exponentially large $\mathsf{qOR}$ operator, described in \Cref{qOR}, to these $\log_p(n)$ qudits. Importantly, this step does not introduce additional auxiliary qudits. Thus, our goal now is to reverse the state of these $\log(n)$ auxiliary qudits to $\ket{0}^{\otimes \log_p(n)}$, without impacting the result obtained from the $\mathsf{qOR}$ operation. We notice that undoing the operation on the $\log(n)$ qubits is possible since the operations of \Cref{OrReduction} can be reverted even if the first and second registers are entangled with the qubit where the $\mathsf{qOR}$ was applied. This implies that the $\log(n)$ qudits were unnecessary to achieve the intended final state. Hence, we can reversibly return these qudits to their initial state without losing any information.

The reversal process is straightforward. We follow the steps used to generate these $\log_p(n)$ qudits but omit the application of the $\mathsf{qMOD}$ operator, as delineated in \Cref{OrReduction}. Instead, we employ the $\mathsf{qMOD}^{p-1}$ operator. This ensures that all qudits are restored to the $\ket{0}$ state, completing the process unitarily.
\end{proof}

For the next collapse, we need to demonstrate that the quantum threshold gate is in $\iQNC^0[p]$. For that, we first demonstrate that the quantum version of the $\mathsf{Exact}_k$ function is contained in the same circuit class. 
\begin{definition}
\noindent $\mathbf{\mathsf{Exact}}_k$. The $\mathsf{Exact}_k$ function is defined as follows,
\begin{equation}
\mathsf{Exact}_k(x) = 
\begin{cases} 
1 & \text{if } \sum_{i=1}^{n} x_i = k \\
0 & \text{otherwise}
\end{cases},
\end{equation} 
\noindent and its quantum analog as the following operation over Hilbert spaces of prime dimension $p$
\begin{equation}
\mathsf{qExact}_k \ket{x_1, x_2, \hdots, x_n} := \ket{ x_1 + \mathsf{Exact}_k(x_2,x_3,\hdots,x_n) \pmod{p},x_2,\hdots,x_n}.
\end{equation}
\end{definition}

\begin{lemma}\label{Exact}
For any prime $p$,  $\mathsf{qExact}$ can be implemented with a $\iQNC^0[p]$ circuit over qudits of dimension $p$.
\end{lemma}
\begin{proof}
We approach this problem once again using a constructive method, retaining most elements from the circuit used for the $\mathsf{qOR}$ function. The $\mathsf{qOR}$ function fundamentally discerns whether a computational basis state has a Hamming weight of zero. However, the initial step involving the $\mathsf{qOR}$ reduction will be modified. Previously, this reduction applied controlled rotations to GHZ states, effectively implementing a rotation with an amplitude proportional to the Hamming weight of the computational basis states, denoted as $\frac{2\pi |x|}{p^i}$. The new circuit introduces additional rotations to each $i$-th GHZ state with an amplitude of $-\frac{2\pi k}{p^i}$ respectively. This supplemental rotation emulates the behavior of the original circuit, equating the effects of a string with a Hamming weight of zero to those of a string with a Hamming weight of $k$. As a result, when these modified states are subsequently processed by the remaining $\mathsf{qOR}$ circuitry, the outcome will precisely compute the $\mathsf{qExact}_k$ function.
\end{proof}

Using \Cref{Exact}, we push the collapse of \Cref{lemma:i-collapse-qac} further to constant-depth quantum circuits with threshold gates.

\begin{theorem}\label{fullcollapse}
For any $p$ prime, $\iQNC^0[p]$ and $\iQTC^0$ over qudits of dimension $p$ are equivalent, i.e. $\iQNC^0[p]=\iQTC^0$.
\end{theorem}

\begin{proof}
To prove this statement, it is sufficient to show that $\mathsf{qTH}_k$ can be implemented with circuits over qudits of dimension $p$ in $\iQNC[p]$. The circuit for $\mathsf{qTH}_k$ is composed of $n-k$ $\mathsf{qExact}_t$ operations (described in Lemma \ref{Exact}) for which $t$ will range from $k$ to $n$. In this setup, the fanout gate is employed to parallelize the application of the $\mathsf{qExact}_t$ operations. As a result, we generate states of the form $\sum_{x\in \mathbb{F}_p^{n-1}} \alpha_x \ket{\mathsf{Exact}_k(x_2\hdots x_n)}\ket{x}$ based on the input state $\ket{\psi}=\sum_{x\in \mathbb{F}_p^n} \alpha_x \ket{0}^{\otimes (n-k)}\ket{x}$. 

Subsequently, the $\mathsf{qOR}$ gate will be applied to all these states, and a computational basis state $\ket{0}$, obtained with the following state,
\begin{align}
   \sum_{x\in \mathbb{F}_p^n} \alpha_x  &\ket{\mathsf{OR}(\mathsf{Exact}_k(x_2\hdots x_n),\mathsf{Exact}_{k+1}(x_2\hdots x_n),\hdots,\mathsf{Exact}_n(x_2\hdots x_n))}
   \\ &\ket{\mathsf{Exact}_k(x_2\hdots x_n)}_k \hdots \ket{\mathsf{Exact}_k(x_2\hdots x_n)}_n \ket{x}^{\otimes (n-k)}\\
   = \sum_{x\in \mathbb{F}_p^n} \alpha_x &\ket{\mathsf{TH}_k(x_1\hdots x_n)}\hdots \ket{x}^{\otimes (n-k)} \ .
\end{align}
It requires only applying the $\mathsf{SUM}$ gate between the first qudit and the first qudit of the state $\ket{\psi}$, and we obtain the state resulting from applying the $\mathsf{qTH}_k$ on the input state $\ket{\psi}$. The last step resumes applying all the inverse unitaries on the auxiliary qudits to detangle those from the resulting state. Lastly, we apply the inverse unitaries to all auxiliary qudits, disentangling them from the final state while preserving the desired output.
\end{proof}

\subsection{Constant-depth qudit circuits  with classical modular operators}

In this section, we consider the collapse of constant-depth circuits with infinite-size quantum gate sets and classical unbounded fan-in modular gates in combination with unbounded classical fanout.

We first show that quantum modular gates can be implemented with constant-depth measurement patterns, along with classical unbounded fan-in modular gates and classical fanout.

\begin{lemma}\label{cqMOD}
The $\mathsf{qMOD}_p$ gate can be implemented in $\iQNC_{\mathsf{cf}}^0[p]_\mathsf{c}$. 
\end{lemma}

\begin{proof}
Our first step in this proof is to show  that a circuit in $\iQNC_{\mathsf{cf}}^0[p]_\mathsf{c}$ can implement the $\mathsf{fanout}_p$ gate over an arbitrary quantum state, $\ket\psi=\sum_{x\in\{0,1,\hdots,p-1\}^n} \alpha_x \ket{x}$. 

For that, we first create an $\ket{\mathsf{GHZ}_{p,n}^0}$ state, which can be constructed using the circuit described in \cref{poorconst}  combined with the circuit described in \cref{reduc}  which are both contained in $\iQNC_{\mathsf{cf}}^0[p]_\mathsf{c}$. We then apply a $\mathsf{SUM}$ gate where the control-qudit is the first qudit of $\ket\psi$ and the target is the first qudit of the GHZ state. This leads to the state, 
\begin{equation}
\ket{\psi^{(1)}}= \sum_{x\in \mathbb{F}_p^n }  \alpha_x  (\mathsf{I}^{\otimes n} \otimes (\mathsf{X}_p)^{x_1} \otimes \mathsf{I}^{\otimes (n-1)} )  \ket{x}  \ket{\mathsf{GHZ}_p^0}\ .
\end{equation} 
\noindent Subsequently, we measure the first qudit of the  second register with outcome $m_1 \in \{0,1,\hdots,p-1\}$. The post-measured state will then be 
\begin{equation}
\ket{\psi^{(2)}}= \sum_{x\in \mathbb{F}_p^{n}} \alpha_{x} \ket{x_1,x_2\hdots x_n,m_1,(x_1+m_1) \mod p,\hdots,(x_1+m_1) \mod p}\ .
\end{equation}
We then apply $(\mathsf{X}_p)^{-m_1}$ to each of the  $n-1$ qudits of the second register, leading to the state
\begin{equation}
\ket{\psi^{(3)}} = \sum_{x\in \mathbb{F}_p^{n}} \alpha_{x} \ket{x_1,x_2, \hdots, x_n,m_1,x_1,\hdots,x_1}\ .
\end{equation}

Using this state, we can apply a layer of $n-1$ parallel $\mathsf{SUM}$ gates where the control gate is one of the $n-1$ last qudits, and the target is one of the qudits from the $2$nd to the $n$-th qudit in the first register. Resulting in the following state
\begin{equation}
    \ket{\psi^{(4)}}=  \sum_{x\in \mathbb{F}_p^{n}} \alpha_{x} \ket{x_1,(x_2+x_1)\ \mathsf{mod}\ p, \hdots,(x_n+x_1)\ \mathsf{mod}\ p,m_1,x_1,\hdots,x_1}\ ,
\end{equation}
which is equivalent to applying the $\mathsf{fanout}_p$ gate on state $\ket{\psi^{(3)}}$.

The last step is to remove the entanglement between the first $n$ qudits and the last $n-1$ ones. For that, we apply the Fourier gate to the last $n-1$ qudits,
and we have  
\begin{align}
    \ket{\psi^{(5)}}=  &\frac{1}{\sqrt{p^n}}\sum_{m_2\hdots m_n\in \mathbb{F}_p^{n-1}} \sum_{x\in \mathbb{F}_p^{n}} \alpha_{x}\cdot e^{i\frac{-2\pi x_1 \left( \sum_{i=2}^n m_i\right)}{p} } \\  &\ket{x_1,(x_2+x_1)\ \mathsf{mod}\ p,\hdots,(x_{n-1}+x_1)\ \mathsf{mod}\ p, m_1,\hdots,m_n}\ .
\end{align}
We can then measure the last $n-1$ qudits with outcome $m_2,...,m_n \in \mathbb{F}_p$ to determine the value of $\phi=-\sum_{i=2}^n m_i \pmod{p}$ and apply a controlled generalized rotation $\mathsf{CGR}_Z^p$ with control on $x_1$ and parameterized by $\phi$ to remove the relative phase introduced by the Fourier gates\footnote{The value of $\phi$ is classical and can be determined using the classical $\MOD_p$ gates with a binary outcome in the form of $p$ bits, with only one of them being equal to $1$, determining the value $\phi$. }. Tracing out the measured states, we have the state
\begin{equation}
\ket{\psi^{(6)}}= \sum_{x\in \mathbb{F}_p^{n}} \alpha_{x} \ket{x_1,(x_2+x_1)\ \mathsf{mod}\ p, \hdots,(x_n+x_1)\ \mathsf{mod}\ p}\ .
\end{equation}
\noindent which corresponds to state $(\mathsf{fanout_p}\ket{\psi})$, as intended. 

Finally, we only need to use the circuit that translates the $\mathsf{fanout}_p$ gate to the $\mathsf{qMOD}_p$ gate, using simply the standard translation presented by qubits \cite{Hoyer05} in higher dimensions with the Fourier gates. This and the fact that all the sub-processes described previously are contained in $\iQNC_{\mathsf{cf}}^0[p]_\mathsf{c}$ completes the proof. 
\end{proof}

Using this lemma and the results in the previous subsection we can prove the following.

\begin{lemma}\label{classical_collapse}
For any prime $p$, $\iQNC_{\mathsf{cf}}^0[p]_\mathsf{c}$ and $\iQTC^0$ over qudits of dimension $p$ are equivalent, i.e. $\iQNC_{\mathsf{cf}}^0[p]_\mathsf{c}=\iQTC^0$.
\end{lemma}
\begin{proof}
It follows from \Cref{cqMOD} that $\iQNC_{\mathsf{cf}}^0[p]_\mathsf{c}=\iQNC^0[p]$. The result follows since $\iQTC^0=\iQNC^0[p]$ circuits over qudits of dimension $p$ from \Cref{fullcollapse}.
\end{proof}

\Cref{classical_collapse} shows that for the same collapses, one only needs constant-depth quantum circuits with bounded fan-in gates, while the additional unbounded fan-in gates are classical. This was previously known for the qubit case through realizations with the MBQC model over qubits \cite{Browne11}; however, we demonstrate that this applies equally to constant-depth MBQC over prime-dimensional qudits \cite{booth22}, as they are capable of the same type of circuit collapses. We now show how these circuit classes compare to each other.

\begin{theorem}\label{qubitcol}
For any $p$ and $q$ prime, $\iQTC^0$ circuits over qudits of dimension $p$ are contained within $\iQNC_{\mathsf{cf}}^0[q]_\mathsf{c}$ circuits over qubits with abritary prime modular unbounded fan-in gates.
\end{theorem}
\begin{proof}
The proof proceeds in two steps. Firstly, we show that $\iQNC^0$ circuit over qudits of any prime dimension $p$ and $q$ are contained within $\iQNC^0$ circuits over qubits. To establish this, we need a mapping from the qudit basis to the qubit basis for both prime dimensions, $p$ and $q$. For example, qutrits can be mapped to qubits using the following correspondence: $\ket{0}\rightarrow \ket{00}, \ket{1}\rightarrow \ket{01}$ and $\ket{2}\rightarrow \ket{11}$, utilizing only a subspace of the Hilbert space of dimension 4. Consequently, any qudit can be translated into $2 \lceil p/2 \rceil$  qubits. Additionally, any unitary operator over a Hilbert space of dimensions $p$ and $2p$ can be mapped to a unitary operator over a Hilbert space of dimensions $2 \lceil p/2 \rceil$  and $4 \lceil p/2 \rceil$. This new operator applies the same transformation over the encoded qubit basis as the original did over the qudits, while the basis ignored by the encoding is operated by the identity locally.

This argument can be augmented with the demonstration that all such fixed sizes can be constructed in constant depth using single and two-qubit gates, as described in \cite{reck94} and \cite{Barenco95}. These methods prove that any unitary operator on $k$ qubits can be synthesized using at most $\mathcal{O}(k^3 4^k)$ two-qubit gates. This further suggests that all operations in $\iQNC^0$  over qudits of dimension $p$ and $q$ can be executed by an $\iQNC^0$ circuits over qubits with a specific encoding for each prime dimension.

Given that all operations in $\iQNC^0$ over qudits of dimension $q$ can be achieved within $\iQNC^0$ circuits over qubits, it follows that all operations in $\iQNC_{\mathsf{cf}}^0[q]_\mathsf{c}$ oved qudits of dimension $q$ also belong to $\iQNC_{\mathsf{cf}}^0[q]_\mathsf{c}$ circuit over qubits. This implies that $\iQNC_{\mathsf{cf}}^0[q]_\mathsf{c}$ over qubits contains the $\MOD_p$  gate. Coupled with the fact that $\iQNC^0$ circuits over qubits can also implement all the operations in $\iQNC^0$ circuits over qudits do dimension $p$, we deduce that $\iQNC_{\mathsf{cf}}^0[q]_\mathsf{c}$ circuits over qubits can execute all operations in $\iQNC_{\mathsf{cf}}^0[p]_\mathsf{c}$ for any given prime $p$. The final piece to consider is Lemma \ref{classical_collapse}, which allows us to arrive at the desired result.
\end{proof}

Building on this result, we can also extend the quantum-classical separations illustrated by \cite{Green02} and \cite{tani16}, which separated $\iQNC^0[2]$ from the classical $\AC^0[p]$ classes.

\begin{corollary}\label{sepinfinit}
For all $p$ and $q$ prime,  $\iQNC_{\mathsf{cf}}^0[q]_\mathsf{c} \nsubseteq \AC^0[p]$.
\end{corollary}
\begin{proof}
The $\iQNC_{\mathsf{cf}}^0[q]_\mathsf{c}$ circuits over qubits contains all the operations of $\iQNC_{\mathsf{cf}}^0$ circuits over qudits of arbitrary prime dimensions; therefore, it contains all the operations in $\iQTC^0$, which include all the $\MOD_k$ operations, with $k$ prime. This is combined with Razborov-Smolensky separations (Theorem \ref{Acseparations}), which states that all $\AC^0[p]$ classes fail to compute any $\MOD_k$ gate other than the $\MOD_{p}$ gate with $p$ and $k$ being two distinct prime numbers finishes the proof.
\end{proof}

\section{Conclusion and discussion}

Intending to explore potential quantum advantages in the shallow-depth regime and gain a deeper understanding of quantum circuit classes in this context, we have shown that the separations previously identified between $\QNC^0$ circuits with quantum advice and $\AC^0[2]$ circuits generalize to arbitrary primes when extending quantum circuits from qubits to qudits. Specifically, this allows us to demonstrate that the same qudit-based quantum circuits can be used in proving techniques to separate $\QAC^0$ with classical fanout from all $\AC^0[p]$ circuit classes simultaneously. The inclusion of classical fanout alone represents the potentially weakest resource known to achieve such a separation. However, it remains an open question whether a separation from the $\AC^0[p]$ classes is possible using only $\QAC^0$ or $\QNC^0$ circuits.

Additionally, all previously identified separations can be implemented using quantum circuits with a fixed gate set, thus opening the possibility for an error-corrected realization, as has been shown feasible in some quantum-classical shallow-depth separations \cite{bravyi20,hsieh2024unconditionally,caha2023colossal}. However, the question of their resilience to noise remains open.

Furthermore, when considering these circuit classes with infinite-size gate sets and additional modular gates, we have shown that the previously established collapses \cite{tani16} involving qubit circuits and the parity gate also extend to quantum circuits with arbitrary prime qudit dimensions, incorporating modular gates of the corresponding prime arity. Additionally, we have demonstrated that the same separation can be achieved using qubit circuits that include an additional modular gate of arbitrary prime arity. This result contrasts sharply with the classical setting, where $\NC^0$ circuits with access to modular gates of different primes generate distinct, separable circuit classes.

More importantly, for hardware implementations and the execution of subroutines in quantum algorithms with conjectured exponential advantage, these new collapses—including those achieved with classical modular gates—extend and may facilitate the realization of such quantum circuits. Classical multi-bit gates are technically simpler to implement than multi-qudit gates, making these results of greater practical interest.

\bibliographystyle{plain}
\bibliography{library}

@inproceedings{Watts19,
  title={Exponential separation between shallow quantum circuits and unbounded fan-in shallow classical circuits},
  author={Watts, Adam Bene and Kothari, Robin and Schaeffer, Luke and Tal, Avishay},
  booktitle={Proceedings of the 51st Annual ACM SIGACT Symposium on Theory of Computing},
  pages={515--526},
  year={2019}
}

@inproceedings{Smolensky87,
author = {Smolensky, R.},
title = {Algebraic Methods in the Theory of Lower Bounds for Boolean Circuit Complexity},
year = {1987},
isbn = {0897912217},
publisher = {Association for Computing Machinery},
address = {New York, NY, USA},
url = {https://doi.org/10.1145/28395.28404},
doi = {10.1145/28395.28404},
abstract = {We use algebraic methods to get lower bounds for complexity of different functions based on constant depth unbounded fan-in circuits with the given set of basic operations. In particular, we prove that depth k circuits with gates NOT, OR and MODp where p is a prime require Exp(Ο(n1/2k)) gates to calculate MODr functions for any r ≠ pm. This statement contains as special cases Yao's PARITY result [ Ya 85 ] and Razborov's new MAJORITY result [Ra 86] (MODm gate is an oracle which outputs zero, if the number of ones is divisible by m).},
booktitle = {Proceedings of the Nineteenth Annual ACM Symposium on Theory of Computing},
pages = {77–82},
numpages = {6},
location = {New York, New York, USA},
series = {STOC '87}
}

@article{tani16,
  title={Collapse of the hierarchy of constant-depth exact quantum circuits},
  author={Takahashi, Yasuhiro and Tani, Seiichiro},
  journal={computational complexity},
  volume={25},
  pages={849--881},
  year={2016},
  publisher={Springer}
}

@article{Green02,
author = {Green, Frederic and Homer, Steven and Moore, Cristopher and Pollett, Christopher},
title = {Counting, Fanout and the Complexity of Quantum ACC},
year = {2002},
issue_date = {December 2002},
publisher = {Rinton Press, Incorporated},
address = {Paramus, NJ},
volume = {2},
number = {1},
issn = {1533-7146},
abstract = {We propose definitions of QAC0, the quantum analog of the classical class AC0 of constant-depth circuits with AND and OR gates of arbitrary fan-in, and QACC[q], the analog of the class ACC[q] where Modq gates are also allowed. We prove that parity or fanout allows us to construct quantum MODq gates in constant depth for any q, so QACC [2] = QACC. More generally, we show that for any q, p > 1, MODq is equivalent to MODp (up to constant depth and polynomial size). This implies that QAC0 with unbounded fanout gates, denoted QACwf0, is the same as QACC [q] and QACC for all q. Since ACC[p] ≠ ACC[q] whenever p and q are distinct primes, QACC [q] is strictly more powerful than its classical counterpart, as is QAC0 when fanout is allowed. This adds to the growing list of quantum complexity classes which are provably more powerful than their classical counterparts. We also develop techniques for proving upper bounds for QACC in terms of related language classes. We define classes of languages closely related to QACC [2] and show that restricted versions of them can be simulated by polynomial-size circuits. With further restrictions, language classes related to QACC [2] operators can be simulated by classical threshold circuits of polynomial size and constant depth.},
journal = {Quantum Info. Comput.},
month = {dec},
pages = {35–65},
numpages = {31},
keywords = {quantum & circuit complexity, quantum computation, threshold circuit}
}

@ARTICLE{Peres23,
       author = {{Peres}, Filipa C.~R.},
        title = "{The Pauli-based model of quantum computation with higher dimensional systems}",
      journal = {arXiv e-prints},
     keywords = {Quantum Physics},
         year = 2023,
        month = feb,
          eid = {arXiv:2302.13702},
        pages = {arXiv:2302.13702},
          doi = {10.48550/arXiv.2302.13702},
archivePrefix = {arXiv},
       eprint = {2302.13702},
 primaryClass = {quant-ph},
       adsurl = {https://ui.adsabs.harvard.edu/abs/2023arXiv230213702P},
      adsnote = {Provided by the SAO/NASA Astrophysics Data System}
}

@article{
Bravyi17,
author = {Sergey Bravyi  and David Gosset  and Robert König },
title = {Quantum advantage with shallow circuits},
journal = {Science},
volume = {362},
number = {6412},
pages = {308-311},
year = {2018},
doi = {10.1126/science.aar3106},
URL = {https://www.science.org/doi/abs/10.1126/science.aar3106},
eprint = {https://www.science.org/doi/pdf/10.1126/science.aar3106},
abstract = {Quantum computers are expected to be better at solving certain computational problems than classical computers. This expectation is based on (well-founded) conjectures in computational complexity theory, but rigorous comparisons between the capabilities of quantum and classical algorithms are difficult to perform. Bravyi et al. proved theoretically that whereas the number of “steps” needed by parallel quantum circuits to solve certain linear algebra problems was independent of the problem size, this number grew logarithmically with size for analogous classical circuits (see the Perspective by Montanaro). This so-called quantum advantage stems from the quantum correlations present in quantum circuits that cannot be reproduced in analogous classical circuits. Science, this issue p. 308; see also p. 289 Parallel quantum circuits outperform classical counterparts at solving certain linear algebra problems. Quantum effects can enhance information-processing capabilities and speed up the solution of certain computational problems. Whether a quantum advantage can be rigorously proven in some setting or demonstrated experimentally using near-term devices is the subject of active debate. We show that parallel quantum algorithms running in a constant time period are strictly more powerful than their classical counterparts; they are provably better at solving certain linear algebra problems associated with binary quadratic forms. Our work gives an unconditional proof of a computational quantum advantage and simultaneously pinpoints its origin: It is a consequence of quantum nonlocality. The proposed quantum algorithm is a suitable candidate for near-future experimental realizations, as it requires only constant-depth quantum circuits with nearest-neighbor gates on a two-dimensional grid of qubits (quantum bits).}}

@article{coudron2021,
  title={Trading locality for time: certifiable randomness from low-depth circuits},
  author={Coudron, Matthew and Stark, Jalex and Vidick, Thomas},
  journal={Communications in mathematical physics},
  volume={382},
  pages={49--86},
  year={2021},
  publisher={Springer}
}

@ARTICLE{Watts23,
       author = {{Bene Watts}, Adam and {Parham}, Natalie},
        title = "{Unconditional Quantum Advantage for Sampling with Shallow Circuits}",
      journal = {arXiv e-prints},
     keywords = {Quantum Physics, Computer Science - Computational Complexity},
         year = 2023,
        month = jan,
          eid = {arXiv:2301.00995},
        pages = {arXiv:2301.00995},
          doi = {10.48550/arXiv.2301.00995},
archivePrefix = {arXiv},
       eprint = {2301.00995},
 primaryClass = {quant-ph},
       adsurl = {https://ui.adsabs.harvard.edu/abs/2023arXiv230100995B},
      adsnote = {Provided by the SAO/NASA Astrophysics Data System}
}

@article{Hoyer05,
 author = {H{\o}yer, Peter and {\v S}palek, Robert},
 title = {Quantum Fan-out is Powerful},
 year = {2005},
 pages = {81--103},
 doi = {10.4086/toc.2005.v001a005},
 publisher = {Theory of Computing},
 journal = {Theory of Computing},
 volume = {1},
 number = {5},
 URL = {https://theoryofcomputing.org/articles/v001a005},
}

@InProceedings{Browne11,
author="Browne, Dan
and Kashefi, Elham
and Perdrix, Simon",
editor="van Dam, Wim
and Kendon, Vivien M.
and Severini, Simone",
title="Computational Depth Complexity of Measurement-Based Quantum Computation",
booktitle="Theory of Quantum Computation, Communication, and Cryptography",
year="2011",
publisher="Springer Berlin Heidelberg",
address="Berlin, Heidelberg",
pages="35--46",
abstract="In this paper, we mainly prove that the ``depth of computations'' in the one-way model is equivalent, up to a classical side-processing of logarithmic depth, to the quantum circuit model augmented with unbounded fanout gates. It demonstrates that the one-way model is not only one of the most promising models of physical realisation, but also a very powerful model of quantum computation. It confirms and completes previous results which have pointed out, for some specific problems, a depth separation between the one-way model and the quantum circuit model. Since one-way model has the same parallel power as unbounded quantum fan-out circuits, the quantum Fourier transform can be approximated in constant depth in the one-way model, and thus the factorisation can be done by a polytime probabilistic classical algorithm which has access to a constant-depth one-way quantum computer. The extra power of the one-way model, comparing with the quantum circuit model, comes from its classical-quantum hybrid nature. We show that this extra power is reduced to the capability to perform unbounded classical parity gates in constant depth.",
isbn="978-3-642-18073-6"
}

@article{mori2018,
  title={Periodic Fourier representation of Boolean functions},
  author={Mori, Ryuhei},
  journal={arXiv preprint arXiv:1803.09947},
  year={2018}
}

@article{Briet23,
  title={Noisy decoding by shallow circuits with parities: classical and quantum},
  author={Bri{\"e}t, Jop and Buhrman, Harry and Castro-Silva, Davi and Neumann, Niels MP},
  journal={arXiv preprint arXiv:2302.02870},
  year={2023}
}

@inproceedings{Grier20,
author = {Grier, Daniel and Schaeffer, Luke},
title = {Interactive Shallow Clifford Circuits: Quantum Advantage against NC¹ and Beyond},
year = {2020},
isbn = {9781450369794},
publisher = {Association for Computing Machinery},
address = {New York, NY, USA},
url = {https://doi.org/10.1145/3357713.3384332},
doi = {10.1145/3357713.3384332},
abstract = {Recent work of Bravyi et al. and follow-up work by Bene Watts et al. demonstrates a quantum advantage for shallow circuits: constant-depth quantum circuits can perform a task which constant-depth classical (i.e., 0) circuits cannot. Their results have the advantage that the quantum circuit is fairly practical, and their proofs are free of hardness assumptions (e.g., factoring is classically hard, etc.). Unfortunately, constant-depth classical circuits are too weak to yield a convincing real-world demonstration of quantum advantage. We attempt to hold on to the advantages of the above results, while increasing the power of the classical model. Our main result is a two-round interactive task which is solved by a constant-depth quantum circuit (using only Clifford gates, between neighboring qubits of a 2D grid, with Pauli measurements), but such that any classical solution would necessarily solve -hard problems. This implies a more powerful class of constant-depth classical circuits (e.g., 0[p] for any prime p) unconditionally cannot perform the task. Furthermore, under standard complexity-theoretic conjectures, log-depth circuits and log-space Turing machines cannot perform the task either. Using the same techniques, we prove hardness results for weaker complexity classes under more restrictive circuit topologies. Specifically, we give 0 interactive tasks on 2 \texttimes{} n and 1 \texttimes{} n grids which require classical simulations of power 1 and 0[6], respectively. Moreover, these hardness results are robust to a small constant fraction of error in the classical simulation. We use ideas and techniques from the theory of branching programs, quantum contextuality, measurement-based quantum computation, and Kilian randomization.},
booktitle = {Proceedings of the 52nd Annual ACM SIGACT Symposium on Theory of Computing},
pages = {875–888},
numpages = {14},
keywords = {classical simulation, constant-depth, measurement-based computation, interactive protocols, quantum advantage, Clifford circuits, relation problems},
location = {Chicago, IL, USA},
series = {STOC 2020}
}

@article{bravyi2022,
  title={Adaptive constant-depth circuits for manipulating non-abelian anyons},
  author={Bravyi, Sergey and Kim, Isaac and Kliesch, Alexander and Koenig, Robert},
  journal={arXiv preprint arXiv:2205.01933},
  year={2022}
}

@article{pade2020,
  title={Depth-2 QAC circuits cannot simulate quantum parity},
  author={Pad{\'e}, Daniel and Fenner, Stephen and Grier, Daniel and Thierauf, Thomas},
  journal={arXiv preprint arXiv:2005.12169},
  year={2020}
}

@article{rosenthal2020,
  title={Bounds on the $QAC^0$ Complexity of Approximating Parity},
  author={Rosenthal, Gregory},
  journal={arXiv preprint arXiv:2008.07470},
  year={2020}
}

@article{rossman15,
author = {Rossman, Benjamin and Servedio, Rocco A. and Tan, Li-Yang},
title = {Complexity Theory Column 89: The Polynomial Hierarchy, Random Oracles, and Boolean Circuits},
year = {2015},
issue_date = {December 2015},
publisher = {Association for Computing Machinery},
address = {New York, NY, USA},
volume = {46},
number = {4},
issn = {0163-5700},
url = {https://doi.org/10.1145/2852040.2852052},
doi = {10.1145/2852040.2852052},
abstract = {We give an overview of a recent result [RST15] showing that the polynomial hierarchy is in finite relative to a random oracle. Since the early 1980s it has been known that this result would follow from a certain "average-case depth hierarchy theorem" for Boolean circuits. In this article we present some background and history of related relativized separations; sketch the argument showing how the polynomial hierarchy result follows from the circuit lower bound; and explain the techniques underlying the new circuit lower bound.},
journal = {SIGACT News},
month = {dec},
pages = {50–68},
numpages = {19}
}

@phdthesis{booth22,
  TITLE = {{Measurement-based quantum computation beyond qubits}},
  AUTHOR = {Booth, Robert Ivan},
  URL = {https://theses.hal.science/tel-03867179},
  NUMBER = {2022SORUS022},
  SCHOOL = {{Sorbonne Universit{\'e}}},
  YEAR = {2022},
  MONTH = Feb,
  KEYWORDS = {Measurement-based quantum computation ; Flow ; Qudits ; Information quantique ; Calcul quantique ; Calcul quantique par mesure ; Flot ; Qudits ; Variables continues},
  TYPE = {Theses},
  PDF = {https://theses.hal.science/tel-03867179/file/BOOTH_Robert_these_2022.pdf},
  HAL_ID = {tel-03867179},
  HAL_VERSION = {v1},
}

@article{fang06,
author = {Fang, M. and Fenner, S. and Green, F. and Homer, S. and Zhang, Y.},
title = {Quantum Lower Bounds for Fanout},
year = {2006},
issue_date = {January 2006},
publisher = {Rinton Press, Incorporated},
address = {Paramus, NJ},
volume = {6},
number = {1},
issn = {1533-7146},
abstract = {We consider the resource bounded quantum circuit model with circuits restricted by thenumber of qubits they act upon and by their depth. Focusing on natural universal sets of gates which are familiar from classical circuit theory, several new lower bounds forconstant depth quantum circuits are proved. The main result is that parity (and hencefanout) requires log depth quantum circuits, when the circuits are composed of singlequbit and arbitrary size Toffoli gates, and when they use only constantly many ancill\ae{}.Under this constraint, this bound is close to optimal. In the case of a non-constantnumber a of ancill\ae{} and n input qubits, we give a tradeoff between a and the requireddepth, that results in a non-constant lower bound for fanout when a = n1-o(1). We alsoshow that, regardless of the number of ancill\ae{} arbitrary arity Toffoli gates cannot besimulated exactly by a constant depth circuit family with gates of bounded arity.},
journal = {Quantum Info. Comput.},
month = {jan},
pages = {46–57},
numpages = {12},
keywords = {quantum complexity, circuit complexity, quantum computation, fanout}
}

@article{bera11,
  title={A lower bound method for quantum circuits},
  author={Bera, Debajyoti},
  journal={Information processing letters},
  volume={111},
  number={15},
  pages={723--726},
  year={2011},
  publisher={Elsevier}
}

@inproceedings{green2000,
  title={On the complexity of quantum ACC},
  author={Green, Frederic and Homer, Steven and Pollett, Chris},
  booktitle={Proceedings 15th Annual IEEE Conference on Computational Complexity},
  pages={250--262},
  year={2000},
  organization={IEEE}
}

@article{jozsa2006,
  title={An introduction to measurement based quantum computation},
  author={Jozsa, Richard},
  journal={NATO Science Series, III: Computer and Systems Sciences. Quantum Information Processing-From Theory to Experiment},
  volume={199},
  pages={137--158},
  year={2006}
}

@article{Arora22,
       author = {{Singh Arora}, Atul and {Coladangelo}, Andrea and {Coudron}, Matthew and {Gheorghiu}, Alexandru and {Singh}, Uttam and {Waldner}, Hendrik},
        title = "{Quantum Depth in the Random Oracle Model}",
      journal = {arXiv e-prints},
     keywords = {Quantum Physics, Computer Science - Computational Complexity, Computer Science - Cryptography and Security},
         year = 2022,
        month = oct,
          eid = {arXiv:2210.06454},
        pages = {arXiv:2210.06454},
          doi = {10.48550/arXiv.2210.06454},
archivePrefix = {arXiv},
       eprint = {2210.06454},
 primaryClass = {quant-ph},
       adsurl = {https://ui.adsabs.harvard.edu/abs/2022arXiv221006454S},
      adsnote = {Provided by the SAO/NASA Astrophysics Data System}
}

@article{bravyi20,
  title={Quantum advantage with noisy shallow circuits},
  author={Bravyi, Sergey and Gosset, David and Koenig, Robert and Tomamichel, Marco},
  journal={Nature Physics},
  volume={16},
  number={10},
  pages={1040--1045},
  year={2020},
  publisher={Nature Publishing Group UK London}
}

@article{reck94,
  title = {Experimental realization of any discrete unitary operator},
  author = {Reck, Michael and Zeilinger, Anton and Bernstein, Herbert J. and Bertani, Philip},
  journal = {Phys. Rev. Lett.},
  volume = {73},
  issue = {1},
  pages = {58--61},
  numpages = {0},
  year = {1994},
  month = {Jul},
  publisher = {American Physical Society},
  doi = {10.1103/PhysRevLett.73.58},
  url = {https://link.aps.org/doi/10.1103/PhysRevLett.73.58}
}

@article{Barenco95,
  title = {Elementary gates for quantum computation},
  author = {Barenco, Adriano and Bennett, Charles H. and Cleve, Richard and DiVincenzo, David P. and Margolus, Norman and Shor, Peter and Sleator, Tycho and Smolin, John A. and Weinfurter, Harald},
  journal = {Phys. Rev. A},
  volume = {52},
  issue = {5},
  pages = {3457--3467},
  numpages = {0},
  year = {1995},
  month = {Nov},
  publisher = {American Physical Society},
  doi = {10.1103/PhysRevA.52.3457},
  url = {https://link.aps.org/doi/10.1103/PhysRevA.52.3457}
}

@article{goldreich23teaching,
  title={On teaching the approximation method for circuit lower bounds},
  author={Goldreich, Oded},
  journal={Eletronical Colloquium on Computational Complexity},
  year={2023}
}

@book{arora2009computational,
  title={Computational complexity: a modern approach},
  author={Arora, Sanjeev and Barak, Boaz},
  year={2009},
  publisher={Cambridge University Press}
}

@article{furst1984parity,
  title={Parity, circuits, and the polynomial-time hierarchy},
  author={Furst, Merrick and Saxe, James B and Sipser, Michael},
  journal={Mathematical systems theory},
  volume={17},
  number={1},
  pages={13--27},
  year={1984},
  publisher={Springer}
}

@article{Oliveira22,
  doi = {10.22331/q-2024-04-09-1312},
  url = {https://doi.org/10.22331/q-2024-04-09-1312},
  title = {Quantum advantage in temporally flat measurement-based quantum computation},
  author = {Oliveira, Michael de and Barbosa, Lu{\'{i}}s S. and Galv{\~{a}}o, Ernesto F.},
  journal = {{Quantum}},
  issn = {2521-327X},
  publisher = {{Verein zur F{\"{o}}rderung des Open Access Publizierens in den Quantenwissenschaften}},
  volume = {8},
  pages = {1312},
  month = apr,
  year = {2024}
}

@book{luong2009fourier,
  title={Fourier analysis on finite Abelian groups},
  author={Luong, Bao},
  year={2009},
  publisher={Springer Science \& Business Media}
}

@article{Liu2022depthefficient,
  doi = {10.22331/q-2022-09-19-807},
  url = {https://doi.org/10.22331/q-2022-09-19-807},
  title = {Depth-efficient proofs of quantumness},
  author = {Liu, Zhenning and Gheorghiu, Alexandru},
  journal = {{Quantum}},
  issn = {2521-327X},
  publisher = {{Verein zur F{\"{o}}rderung des Open Access Publizierens in den Quantenwissenschaften}},
  volume = {6},
  pages = {807},
  month = sep,
  year = {2022}
}

@article{Preskill2018quantumcomputingin,
  doi = {10.22331/q-2018-08-06-79},
  url = {https://doi.org/10.22331/q-2018-08-06-79},
  title = {Quantum {C}omputing in the {NISQ} era and beyond},
  author = {Preskill, John},
  journal = {{Quantum}},
  issn = {2521-327X},
  publisher = {{Verein zur F{\"{o}}rderung des Open Access Publizierens in den Quantenwissenschaften}},
  volume = {2},
  pages = {79},
  month = aug,
  year = {2018}
}

@inproceedings{aaronson2011computational,
  title={The computational complexity of linear optics},
  author={Aaronson, Scott and Arkhipov, Alex},
  booktitle={Proceedings of the forty-third annual ACM symposium on Theory of computing},
  pages={333--342},
  year={2011}
}

@inproceedings{hu2020space,
  title={Space complexity of streaming algorithms on universal quantum computers},
  author={Hu, Yanglin and Melnyk, Darya and Wang, Yuyi and Wattenhofer, Roger},
  booktitle={Theory and Applications of Models of Computation: 16th International Conference, TAMC 2020, Changsha, China, October 18--20, 2020, Proceedings 16},
  pages={275--286},
  year={2020},
  organization={Springer}
}

@article{goldreich2011three,
  title={Three XOR-lemmas—an exposition},
  author={Goldreich, Oded},
  journal={Studies in Complexity and Cryptography. Miscellanea on the Interplay between Randomness and Computation: In Collaboration with Lidor Avigad, Mihir Bellare, Zvika Brakerski, Shafi Goldwasser, Shai Halevi, Tali Kaufman, Leonid Levin, Noam Nisan, Dana Ron, Madhu Sudan, Luca Trevisan, Salil Vadhan, Avi Wigderson, David Zuckerman},
  pages={248--272},
  year={2011},
  publisher={Springer}
}

@article{maslov2018use,
  title={Use of global interactions in efficient quantum circuit constructions},
  author={Maslov, Dmitri and Nam, Yunseong},
  journal={New Journal of Physics},
  volume={20},
  number={3},
  pages={033018},
  year={2018},
  publisher={IOP Publishing}
}

@article{groenland2020signal,
  title={Signal processing techniques for efficient compilation of controlled rotations in trapped ions},
  author={Groenland, Koen and Witteveen, Freek and Schoutens, Kareljan and Gerritsma, Rene},
  journal={New Journal of Physics},
  volume={22},
  number={6},
  pages={063006},
  year={2020},
  publisher={IOP Publishing}
}

@article{grzesiak2022efficient,
  title={Efficient quantum programming using EASE gates on a trapped-ion quantum computer},
  author={Grzesiak, Nikodem and Maksymov, Andrii and Niroula, Pradeep and Nam, Yunseong},
  journal={Quantum},
  volume={6},
  pages={634},
  year={2022},
  publisher={Vereifn zur F{\"o}rderung des Open Access Publizierens in den Quantenwissenschaften}
}

@article{bravyi2022constant,
  title={Constant-Cost Implementations of Clifford Operations and Multiply-Controlled Gates Using Global Interactions},
  author={Bravyi, Sergey and Maslov, Dmitri and Nam, Yunseong},
  journal={Physical Review Letters},
  volume={129},
  number={23},
  pages={230501},
  year={2022},
  publisher={APS}
}

@article{Mackeprang_2023,
doi = {10.1088/1367-2630/acdf77},
url = {https://dx.doi.org/10.1088/1367-2630/acdf77},
year = {2023},
month = {jul},
publisher = {IOP Publishing},
volume = {25},
number = {7},
pages = {073007},
author = {Jelena Mackeprang and Daniel Bhatti and Matty J Hoban and Stefanie Barz},
title = {The power of qutrits for non-adaptive measurement-based quantum computing},
journal = {New Journal of Physics},
abstract = {Non-locality is not only one of the most prominent quantum features but can also serve as a resource for various information-theoretical tasks. Analysing it from an information-theoretical perspective has linked it to applications such as non-adaptive measurement-based quantum computing (NMQC). In this type of quantum computing the goal is to output a multivariate function. The success of such a computation can be related to the violation of a generalised Bell inequality. So far, the investigation of binary NMQC with qubits has shown that quantum correlations can compute all Boolean functions using at most  qubits, whereas local hidden variables (LHVs) are restricted to linear functions. Here, we extend these results to NMQC with qutrits and prove that quantum correlations enable the computation of all three-valued logic functions using the generalised qutrit Greenberger–Horne–Zeilinger (GHZ) state as a resource and at most  qutrits. This yields a corresponding generalised GHZ type paradox for any three-valued logic function that LHVs cannot compute. We give an example for an n-variate function that can be computed with only n + 1 qutrits, which leads to convenient generalised qutrit Bell inequalities whose quantum bound is maximal. Finally, we prove that not all functions can be computed efficiently with qutrit NMQC by presenting a counterexample.}
}

@ARTICLE{Parham23,
       author = {{Nadimpalli}, Shivam and {Parham}, Natalie and {Vasconcelos}, Francisca and {Yuen}, Henry},
        title = "{On the Pauli Spectrum of QAC0}",
      journal = {arXiv e-prints},
     keywords = {Quantum Physics, Computer Science - Computational Complexity},
         year = 2023,
        month = nov,
          eid = {arXiv:2311.09631},
        pages = {arXiv:2311.09631},
          doi = {10.48550/arXiv.2311.09631},
archivePrefix = {arXiv},
       eprint = {2311.09631},
 primaryClass = {quant-ph},
       adsurl = {https://ui.adsabs.harvard.edu/abs/2023arXiv231109631N},
      adsnote = {Provided by the SAO/NASA Astrophysics Data System}
}

@article{hsieh2024unconditionally,
  title={Unconditional advantage of noisy qudit quantum circuits over biased threshold circuits in constant depth},
  author={de Oliveira, Michael and Subramanian, Sathyawageeswar and Mendes, Leandro and Hsieh, Min-Hsiu},
  journal={Nature Communications},
  volume={16},
  number={1},
  pages={3559},
  year={2025},
  publisher={Nature Publishing Group UK London}
}

@article{grewal2024improved,
  title={Improved Circuit Lower Bounds With Applications to Exponential Separations Between Quantum and Classical Circuits},
  author={Grewal, Sabee and Kumar, Vinayak M},
  journal={arXiv preprint arXiv:2408.16406},
  year={2024}
}

@article{song24,
  title={Constant-depth fan-out with real-time feedforward on a superconducting quantum processor},
  author={Song, Yongxin and Beltr{\'a}n, Liberto and Besedin, Ilya and Kerschbaum, Michael and Pechal, Marek and Swiadek, Fran{\c{c}}ois and Hellings, Christoph and Colao Zanuz, Dante and Flasby, Alexander and Besse, Jean-Claude and others},
  journal={Physical Review Applied},
  volume={24},
  number={2},
  pages={024068},
  year={2025},
  publisher={APS}
}

@article{bouland2019complexity,
  title={On the complexity and verification of quantum random circuit sampling},
  author={Bouland, Adam and Fefferman, Bill and Nirkhe, Chinmay and Vazirani, Umesh},
  journal={Nature Physics},
  volume={15},
  number={2},
  pages={159--163},
  year={2019},
  publisher={Nature Publishing Group UK London}
}

@article{caha2023colossal,
  title={A colossal advantage: 3D-local noisy shallow quantum circuits defeat unbounded fan-in classical circuits},
  author={Caha, Libor and Coiteux-Roy, Xavier and Koenig, Robert},
  journal={arXiv:2312.09209},
  year={2023}
}

@article{quek2024multivariate,
  title={Multivariate trace estimation in constant quantum depth},
  author={Quek, Yihui and Kaur, Eneet and Wilde, Mark M},
  journal={Quantum},
  volume={8},
  pages={1220},
  year={2024},
  publisher={Verein zur F{\"o}rderung des Open Access Publizierens in den Quantenwissenschaften}
}

\end{document}